\newtheorem{thm}{Theorem}
\newtheorem{cor}[thm]{Corollary}
\newtheorem{prop}[thm]{Proposition}
\newtheorem{lem}{Lemma}
\newtheorem{model}{Model}
\theoremstyle{definition}
\newtheorem{defi}{Definition}
\newcommand{\note}[1]{\textcolor{blue}{#1}}
\newcommand*\dif{\mathop{}\!\mathrm{d}}
\newcommand{\A}{\mathcal{A}}
\newcommand{\R}{\mathbb{R}}
\newcommand{\C}{\mathbb{C}}
\newcommand{\N}{\mathbb{N}}
\renewcommand{\i}{\mathrm{i}}
\newcommand{\ind}{\in [N]}
\newcommand{\inn}{\in [n]}
\newcommand{\inzn}{\in [2n]}
\newcommand{\norm}[1]{\left\Vert #1\right\Vert}			
\newcommand{\dualnormsymbol}{^{\circ}}			
\newcommand{\smallnorm}[1]{\lVert #1\rVert}				
\newcommand{\normpsi}[2]{\norm{#1}_{\psi_{#2}}}	
\newcommand{\abs}[1]{\left| #1\right|}					
\newcommand{\smallabs}[1]{\lvert #1\rvert}				
\newcommand{\br}[1]{\left(#1\right)}					
\newcommand{\bigbr}[1]{\big(#1\big)}					
\newcommand{\Bigbr}[1]{\Big(#1\Big)}					
\newcommand{\set}[1]{\left\lbrace #1\right\rbrace}		
\newcommand{\smallset}[1]{\lbrace #1\rbrace}			
\newcommand{\bigset}[1]{\bigl\{#1\bigr\}}				
\newcommand{\scp}[2]{\left\langle #1,#2\right\rangle} 	
\newcommand{\smallscp}[2]{\langle #1,#2\rangle} 		
\newcommand{\prob}[1]{\mathbb{P}[\; #1\;]} 	
\newcommand{\expec}[1]{\mathbb{E}[ #1]} 	
\newcommand{\fro}{\text{F}}
\DeclareMathOperator{\diag}{diag}
\DeclareMathOperator{\Id}{Id}
\DeclareMathOperator{\re}{Re}
\DeclareMathOperator{\im}{Im}
\DeclareMathOperator{\trace}{tr}
\DeclareMathOperator{\vc}{vec}
\newcommand{\herm}{*}
\renewcommand{\vc}[1]{\ensuremath{\boldsymbol{#1}}}
\author{Fabian Jaensch and Peter Jung}
\title{Robust Recovery of Sparse Nonnegative Weights from Mixtures of Positive-Semidefinite Matrices}
\begin{document}
\maketitle
\begin{abstract}
  We consider a structured estimation problem where an observed matrix
  is assumed to be generated as an $s$-sparse linear combination of
  $N$ given $n\times n$ positive-semidefinite matrices.  Recovering
  the unknown $N$-dimensional and $s$-sparse weights from noisy
  observations is an important problem in various fields of signal
  processing and also a relevant pre-processing step in covariance
  estimation. We will present related recovery guarantees and focus on
  the case of nonnegative weights. The problem is formulated as a
  convex program and can be solved without further tuning. Such
  robust, non-Bayesian and parameter-free approaches are important for
  applications where prior distributions and further model parameters
  are unknown.  Motivated by explicit applications in wireless
  communication, we will consider the particular rank-one case, where
  the known matrices are outer products of iid. zero-mean subgaussian
  $n$-dimensional complex vectors. We show that, for given $n$ and
  $N$, one can recover nonnegative $s$--sparse weights with a
  parameter-free convex program once $s\leq O(n^2 / \log^2(N/n^2)$.
  Our error estimate scales linearly in the instantaneous noise power
  whereby the convex algorithm does not need prior bounds on the
  noise.  Such estimates are important if the magnitude of the
  additive distortion depends on the unknown itself.
\end{abstract}
\section{Introduction}

In compressed sensing one is confronted with the inverse problem of
recovering unknown but structured signals from only few observations,
far less the dimension of the signal. This methodology is reasonable
if the effective dimension of the signals of interest is much smaller
then its ambient dimension. A prominent example is the set of
$s$--sparse and $N$-dimensional vectors where $s\ll N$. The original
recovery problem has combinatorial nature and is computationally infeasible
since one essentially has to implicitly search over the exponentially
$\binom{N}{s}$ many support combinations of the unknown signal.

The first fundamental theoretical breakthroughs \cite{Candes2005b,candes:stablesignalrecovery,Donoho2006a} show that
for a linear and real-valued measurement model and under further
assumptions on the so called measurement matrix, it is possible to
recover the unknown vector in the noiseless case by a linear program.
In the noisy case it is also possible to obtain provable guarantees
for certain convex programs, see here for example \cite{Foucart2013},
which usually require a tuning parameter
that often depends on further properties on the noise contribution,
in most cases, the $\ell^2$--norm of the noise. However, there are several
signal processing problems where it is difficult to aquire this
knowledge. For example, in the Poisson noise model this depends on the
unknown signal itself. Another example are certain applications in
sparse covariance matching where the error contribution comes from the
deviation of the empirical to the true covariance matrix, which in turn
depends on the sparse parameter to recover. There are some
concepts known in the literature how to deal with convex compressed
sensing programs in the absence of this a-priori information. To
mention some examples, the quotient bounds \cite{Wojtaszczyk2010} of
the measurement matrix can provide guarantees for the basis pursuit or
the basis pursuit denoising, see for example also \cite[Chapter
11]{Foucart2013}. Empirical approaches and modifications of the convex
programs are also known to get rough estimates for the noise power,
see for example \cite{herrmann:wsa18}. Interestingly, it has been observed
also in \cite{Donoho92,Bruckstein,Slawski,Meinshausen2013} that nonnegativity of the
unknowns together with particular properties of the measurement matrix
yield a ``self-tuning'' approach, which has been worked out in
\cite{Kabanava2015} for the nuclear norm and in \cite{Kueng2018} for the
$\ell^1$--norm with respect to guarantees formulated in the
terminology of the robust nullspace property.

\section{Main Results}
Motivated by covariance matching problems, briefly also sketched
below, we shall consider the problem of recovering
nonnegative and sparse vectors from the noisy matrix
observation
\begin{equation}
  Y=\A(x)+E,
  \label{eq:measmodel}
\end{equation}
where $\A : \R^N \mapsto \C^{n\times n}$ is a given linear measurement
map.
We establish recovery guarentees for the generic convex
program
\begin{equation}
  x^\sharp=\arg\min_{z\geq 0}\|\A(z)-Y\|,
  \label{eq:generic:nnnorm}
\end{equation}
where $\norm{\cdot}$ is a given norm on $\C^{n\times n}$.  We shall
write $\norm{\cdot}_p$ for the $\ell^p$-norms for vectors and matrices
(when seen as a vector). In particular, for the Frobenius norm
$\|\cdot\|_{\fro}=\|\cdot\|_2$ the problem \eqref{eq:generic:nnnorm}
is the so called {\em Nonnegative Least-Squares} (NNLS). Guarantees
for this case have been established already in
\cite{Slawski,Meinshausen2013,Kueng2018}. See here also
\cite{Kabanava2015} for a similar approach for the low-rank and
positive-semidefinite case (instead of sparse and elementwise
nonnegative).

In this work we follow techniques established mainly in
\cite{Kueng2018} and extend our work to the special matrix-valued
observation model \eqref{eq:measmodel}. Then we investigate a
structured random measurement map $\A$ which can be represented as a
matrix with independent heavy-tailed columns containing vectorized
outer products of random vectors with itself. Such matrices are
sometimes also called as (self-) Khatri-Rao products. By construction
such random matrices are biased which is essential for the recovery of
nonnegative vectors via \eqref{eq:generic:nnnorm} (further details
below or see for example also the discussion \cite{Shadmi:isit19} for
the unstructured case).  Recent results about the RIP property of such
matrices after centering and in the real case have obtained in
\cite{Fengler:krrip:2019}.  These investigations have been worked out
towards a NNLS recovery guarantee in
\cite{Hag:isit18,alex2019nonbayesian} using the nullspace property in
the special case where the vectors are drawn from the complex
sphere. In this work we focus on the complex subgaussian case and
establish the corresponding compressed sensing recovery guarantee.

To state our main results we need the following definitions.
For the case of a generic norm $\norm{\cdot}$ on $\C^{n\times n}$ we
let $\norm{\cdot}\dualnormsymbol$ be the corresponding dual norm
defined as
\begin{equation*}
  \norm{Y}\dualnormsymbol:=\sup_{\norm{X}\leq 1}\langle Y,X\rangle,
\end{equation*}
where by $\scp{X}{Y}:=\trace{X^*Y}$ we denote the Hilbert-Schmidt (Frobenius)
inner  product on $\C^{n\times n}$. To simplify notation we will stick
to square matrices in the space $\C^{n\times n}$ but the first part of this work
can be easily rewritten for the non-square case or even for a generic
inner product space.
\cite[Definition 4.21]{Foucart2013} is essential for our analysis.
\begin{defi}\label{def:robust-NSP}
  Let $q\in [1,\infty)$ and $s\in\N$. We say that a linear map $\A : \R^N \mapsto \C^{n\times n}$
  satisfies the $\ell^q$-\textit{robust nullspace property ($\ell^q$-NSP)} of
  order $s$ with respect to $\norm{\cdot}$ with parameters 
  $\rho\in(0,1)$ and $\tau>0$ if for all
  $S\subseteq [N]\coloneqq\set{1,\ldots,N}$ with cardinality $|S|\leq s$
  \begin{equation}\label{eq:NSP}
    \norm{v_S}_q	\leq	\frac{\rho}{s^{1-1/q}} \norm{v_{S^c}}_1 + \tau \norm{\A(v)}
  \end{equation}
  holds for all $v \in \R^N$. Here, $v_S\in\R^N$ denotes the vector
  containing the same entries as $v$ at the positions in $S$ and zeros
  at the others and $S^c=[N]\backslash S$.
\end{defi}
Furthermore, by $\sigma_s(x)_1=\min_{|S|\leq s}\|x-x_S\|_1$ we denote
here the best $s$-term approximation to $x\in\R^N$ in the
$\ell^1$-norm.  The nullspace property is essential for recovery via
$\ell^1$-based convex recovery programs like basis pursuit and basis
pursuit denoising, see for example \cite[Theorem
4.22]{Foucart2013}. When recovering nonnegative vectors, the following
additional property, often called ${\cal M}^+$-criterion, controls
the $\ell^1$-norms of all feasible vectors such that
$\ell^1$-regularization becomes superfluous.
\begin{defi}
  A linear map $\A : \R^N \mapsto \C^{n\times n}$ satisfies the
  ${\cal M}^+$-criterion if there exists a matrix $T\in\C^{n\times n}$
  such that $w\coloneqq\A^*(T)>\vc{0}$ componentwise. For a given $T$,
  we then define the condition number
  $\kappa(w)=\max_{i\in[N]}|w_i|/\min_{i\in[N]}|w_i|$.  
  \label{def:kappa}
\end{defi}
{Note that $\kappa(w)$ is scale-invariant, i.e.,
  $\kappa(w)=\kappa(t w)$ for all $t\neq 0$.}  For further
illustration of this property, consider the noiseless setting and
assume for simplicity that we can find $T$ such that
$w=\A^*(T)=(1,\dots,1)=:1_N$ is the all-one vector. Then
\begin{equation*}
  \begin{split}
    \|x\|_1
    &\overset{x\geq 0}{=}\langle 1_N,x\rangle=\langle\A^*(T),x\rangle\\
    &\,\,=\langle T,\A(x)\rangle=\langle T,Y\rangle=\text{const}
  \end{split}
\end{equation*}
shows that all feasible vectors $x$ have the same $\ell^1$-norm.  As
we shall show below, a similar conclusion follows for the general case
$w>0$ and the tightness of such an argument will depend on $\kappa(w)$.

The following theorem essentially extends and refines \cite[Theorem
3]{Kueng2018} to the case of matrix observations and generic norms.
\begin{thm}
  Let $q\geq1$ and let $\A: \R^N \rightarrow \C^{n\times n}$ be a linear map which  (i) satisfies the
  $\ell^q$--NSP of order $s$ with respect to $\|\cdot\|$ and with
  parameters $\rho\in[0,1)$ and $\tau>0$ and (ii) fulfills the
  ${\cal M}^+$-criterion for $T\in\C^{n\times n}$ with
  $\kappa=\kappa(\A^*(T))$.  If $\rho\kappa<1$, then, for any nonnegative $x\in\R^{N}$
  and all $E\in\C^{n\times n} $, the solution $x^\sharp$ of \eqref{eq:generic:nnnorm} for
  $Y=\A(x)+E$ obeys
  \begin{equation}
    \|x^\sharp-x\|_p	\leq	\frac{C'\kappa\sigma_s(x)_1}{s^{1-\frac{1}{p}}}
    +  \frac{D'\kappa}{s^{\frac{1}{q}-\frac{1}{p}}}\br{\tau + \frac{\theta}{s^{1-\frac{1}{q}}}}\norm{E}
    \label{eq:thm:main:nonneg}
  \end{equation}
  for all $p\in[1,q]$, where
  $C' \coloneqq 2\frac{(1+\kappa\rho)^2}{1-\kappa\rho}$ and
  $D' \coloneqq 2\frac{3 + \kappa\rho}{1 -
    \kappa\rho}$ and $\theta={\smallnorm{\A^*({T})}_\infty^{-1}}\cdot{\norm{T}\dualnormsymbol}$.
  \label{thm:main:nonneg}
\end{thm}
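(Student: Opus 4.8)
The plan is to reduce the claim to the robust nullspace recovery argument of \cite[Theorems 4.22 and 4.25]{Foucart2013}, with the role played there by the $\ell^1$-minimality of the basis-pursuit solution taken over here by the ${\cal M}^+$-criterion. Write $v\coloneqq x^\sharp-x$, let $S\subseteq[N]$ be the support of a best $s$-term approximation of $x$ (so $\norm{x_{S^c}}_1=\sigma_s(x)_1$), and put $w\coloneqq\A^*(T)>\vc{0}$, so that $\min_i w_i=\smallnorm{w}_\infty/\kappa$. Two facts follow at once from optimality: since $x\geq\vc{0}$ is feasible for \eqref{eq:generic:nnnorm}, the minimizer obeys $\norm{\A(x^\sharp)-Y}\leq\norm{\A(x)-Y}=\norm{E}$, and writing $\A(v)=(\A(x^\sharp)-Y)+E$ this yields $\norm{\A(v)}\leq 2\norm{E}$.

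The heart of the argument is to convert the ${\cal M}^+$-criterion into a cone condition on $v$. By the adjoint identity and the dual-norm inequality, $\scp{w}{v}=\scp{T}{\A(v)}\leq\norm{T}\dualnormsymbol\norm{\A(v)}$. I would split the left-hand side as $\scp{w_S}{v_S}+\scp{w_{S^c}}{v_{S^c}}$, bound $\scp{w_S}{v_S}\geq-\smallnorm{w}_\infty\norm{v_S}_1$, and on $S^c$ use $x^\sharp\geq\vc{0}$, $x\geq\vc{0}$ together with $w_i\geq\smallnorm{w}_\infty/\kappa$ and the reverse triangle inequality $\norm{x^\sharp_{S^c}}_1\geq\norm{v_{S^c}}_1-\sigma_s(x)_1$. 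After dividing by $\smallnorm{w}_\infty/\kappa$ this produces
\begin{equation*}
  \norm{v_{S^c}}_1\leq\kappa\norm{v_S}_1+(1+\kappa)\sigma_s(x)_1+\kappa\theta\norm{\A(v)},
\end{equation*}
with $\theta=\smallnorm{\A^*(T)}_\infty^{-1}\norm{T}\dualnormsymbol$. This is the analogue of the classical cone condition $\norm{v_{S^c}}_1\leq\norm{v_S}_1+2\sigma_s(x)_1$, the decisive difference being that $\kappa$ now multiplies $\norm{v_S}_1$; this is exactly what turns the usual requirement $\rho<1$ into $\kappa\rho<1$.

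It remains to run the standard machinery. Combining the $\ell^q$-NSP with H\"older's inequality $\norm{v_S}_1\leq s^{1-1/q}\norm{v_S}_q$ gives $\norm{v_S}_1\leq\rho\norm{v_{S^c}}_1+\tau s^{1-1/q}\norm{\A(v)}$; inserting this into the cone condition and solving for $\norm{v_{S^c}}_1$—which is where $\kappa\rho<1$ is used—bounds $\norm{v_{S^c}}_1$ by the sum of a $\sigma_s(x)_1$-term and a $\norm{\A(v)}$-term, the latter now carrying both the NSP contribution $\tau s^{1-1/q}$ and the cone contribution $\theta$. The passage to the $\ell^p$-error for $p\in[1,q]$ then mirrors \cite[Theorem 4.25]{Foucart2013}: estimate $\norm{v}_p\leq\norm{v_S}_p+\norm{v_{S^c}}_p$, control $\norm{v_S}_p\leq s^{1/p-1/q}\norm{v_S}_q$ through the NSP, and bound $\norm{v_{S^c}}_p$ by the decreasing-rearrangement block argument. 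Finally, substituting $\norm{\A(v)}\leq2\norm{E}$ and using $\kappa\geq1$ (hence $1+\kappa\leq2\kappa$) collects everything into $C'$ and $D'$; both the $\tau$- and the $\theta$-part of the noise appear under the single factor $D'\kappa$, with the two powers $s^{1/p-1/q}$ and $s^{1/p-1}$ matching the bracket $\tau+\theta s^{-(1-1/q)}$.

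I expect the cone condition to be the only genuinely new ingredient; everything after it is the deterministic Foucart--Rauhut argument with the effective substitution $\rho\mapsto\kappa\rho$, which is what upgrades their constants $(1+\rho)^2/(1-\rho)$ and $(3+\rho)/(1-\rho)$ to $(1+\kappa\rho)^2/(1-\kappa\rho)$ and $(3+\kappa\rho)/(1-\kappa\rho)$. The main obstacle is therefore bookkeeping: tracking the constants faithfully through the block estimate so as to land exactly on the stated $C'$ and $D'$ rather than on a cruder bound, and keeping the $\sigma_s(x)_1$-route (constant $C'$) separate from the two noise routes (constant $D'$).
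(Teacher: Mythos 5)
Your proposal is correct, but it takes a genuinely different route from the paper. The paper never manipulates a cone condition directly: it diagonally preconditions, showing (Lemma~\ref{lem:1}) that $\A\circ W^{-1}$ with $W=\diag(\A^*(T))$ inherits the $\ell^q$-NSP with parameters $(\kappa\rho,\norm{W}_o\tau)$, proving (Lemma~\ref{lem:2}) that $\norm{Wz}_1-\norm{Wx}_1\leq\norm{T}\dualnormsymbol\norm{\A(z-x)}$ for nonnegative $x,z$ --- the only place the $\mathcal{M}^+$-criterion enters --- and then applying \cite[Theorem 4.25]{Foucart2013} as a black box to $Wx$, $Wx^\sharp$ and $\A\circ W^{-1}$, before undoing the weighting via $\norm{W^{-1}}_o$ and rescaling $T$ so that $\norm{W}_o=1$. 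You instead open the black box: you extract the weighted cone condition $\norm{v_{S^c}}_1\leq\kappa\norm{v_S}_1+(1+\kappa)\sigma_s(x)_1+\kappa\theta\norm{\A(v)}$ directly from $\scp{w}{v}\leq\norm{T}\dualnormsymbol\norm{\A(v)}$ and nonnegativity (your derivation of it is correct), and re-run the Foucart--Rauhut machinery with $\rho$ effectively replaced by $\kappa\rho$. The bookkeeping does land on the stated constants and is in fact slightly sharper: your $\sigma_s$-coefficient $(1+\rho)^2(1+\kappa)/(1-\kappa\rho)$ is at most $C'\kappa$ via $1+\kappa\leq 2\kappa$ and $\rho\leq\kappa\rho$, and your total $\tau$-coefficient $\frac{(1+\rho)^2\kappa}{1-\kappa\rho}+(2+\rho)$ equals $\kappa\frac{3+\kappa\rho}{1-\kappa\rho}$ exactly at $\kappa=1$ and stays below it for $\kappa>1$, so the factor $2$ from $\norm{\A(v)}\leq2\norm{E}$ yields $D'\kappa$. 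What each approach buys: yours gives marginally better constants because only $\norm{v_S}_1$ picks up $\kappa$ rather than the whole NSP, while the paper's is more modular, reusing Theorem~\ref{thm:1} verbatim and isolating the nonnegativity input in a two-line lemma.

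One point to repair in a full write-up: you fix $S$ as the best $s$-term support of $x$ at the outset, but the decreasing-rearrangement bound you later invoke for $\norm{v_{S^c}}_p$ requires the complement of the $s$ \emph{largest entries of $v$}, not of $x$. This is the same set-switch that occurs inside the proof of \cite[Theorem 4.25]{Foucart2013}: the cone condition (with $S$ adapted to $x$) produces a set-free bound on $\norm{v}_1$, and the $\ell^p$ step is then run with a second index set $S'$ holding the $s$ largest entries of $v$, using $\norm{v_{S'^c}}_1\leq\norm{v}_1$ inside the NSP. With that adjustment your argument goes through exactly as sketched.
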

We prove this theorem in Section \ref{sec:nonneg:generic}.
As a second main result, we show that it is applicable to the
following random observation model:
\begin{model}
  Let $a_i= (a_{i,k})_{k\in[n]}\in\C^n$ for $i=1,\ldots, N$ be independent
  random vectors with independent subgaussian real and
  imaginary parts $\re(a_{i,k})$ and $\im(a_{i,k})$ satisfying
  \begin{align*}
    \expec{a_{i,k}}
    &=\expec{\re(a_{i,k})} =\expec{\im(a_{i,k})} = 0 \\
    \expec{\re(a_{i,k})^2} &= \expec{\im(a_{i,k})^2} = 1/2,
  \end{align*}
  so that $\expec{|a_{i,k}|^2}=1$ and $\expec{\norm{a_i}_2^2}=n$.
  We consider the following map $\A:\R^N\rightarrow\C^{n\times n}$:
  \begin{equation}
    \A(x):=\sum_{i=1}^N x_i a_i a_i^\herm
    \label{eq:rankone:model}
  \end{equation}
  Let $\psi_2\geq 1$ be a uniform bound on the subgaussian norms
  $\normpsi{\re(a_{i,k})}{2}$ and $\normpsi{\im(a_{i,k})}{2}$ for all $i\in [N],k\in
  [n]$, see \eqref{eq:def:psip} below for the
  definition.
  \label{model:aa}
\end{model}
The case where the vectors $a_i$ are drawn uniformly from the complex sphere has been
discussed already in \cite{Hag:isit18} and the full proof of the
recovery guarantee can be found in \cite{alex2019nonbayesian}.
In this work we discuss the subgaussian iid case instead where
additionally also the distribution of $\|a_i\|_2$ affects the probability bounds.
We have the following second main result.
\begin{thm}
  Let $\A: \R^N \rightarrow \C^{n\times n}$ be a random measurement
  map following  Model \ref{model:aa}. Set
  $m\coloneqq2n(n-1)$ and assume
  \begin{equation}
    s\lesssim m\log^{-2}(N/m),
    \label{eq:thm:main:subgaussian:phasetransition}
  \end{equation}
  $n\gtrsim \log(N)$ and $N\geq m$.
  With probability at least $1-4\exp(-c_1\cdot n)$ it holds that for
  all $p\in[1,2]$, all $x\in\R_{\geq 0}^N$ and $E\in\C^{n\times n}$, the
  solution $x^\sharp$ of the NNLS (the convex program \eqref{eq:generic:nnnorm}
  for the Frobenius norm $\|\cdot\|_\fro$) for $Y=\A(x)+E$ obeys
  \begin{equation}\begin{split}
      \|&x^\sharp-x\|_p
      \leq	\frac{c_2\sigma_s(x)_1}{s^{1-\frac{1}{p}}}
      +  \frac{c_3\br{c_4 +
          \sqrt{\frac{n}{s}}}}{s^{\frac{1}{2}-\frac{1}{p}}}\frac{\norm{E}_\fro}{n},
    \end{split}
    \label{eq:thm:main:subgaussian}
  \end{equation}
  where $C_1,c_1,c_2,c_3,c_4$ are absolute constants depending on $\psi_2$
  but not on the dimensions. 
  \label{thm:main:subgaussian}
\end{thm}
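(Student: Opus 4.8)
The plan is to deduce Theorem~\ref{thm:main:subgaussian} from the deterministic Theorem~\ref{thm:main:nonneg} by verifying, with the claimed probability, that a map $\A$ drawn from Model~\ref{model:aa} satisfies both hypotheses with $q=2$, the Frobenius norm $\norm{\cdot}_\fro$, and the certificate $T=\Id$. Under these choices the adjoint acts as $\A^*(T)_i=\scp{a_ia_i^\herm}{T}=a_i^\herm T a_i$, so that $\A^*(\Id)_i=\norm{a_i}_2^2$. Hence the ${\cal M}^+$-criterion holds automatically (these entries are strictly positive almost surely), and the argument reduces to (a) controlling the condition number $\kappa=\kappa(\A^*(\Id))$ together with the quantity $\theta$, and (b) establishing the $\ell^2$-robust nullspace property with parameters $\rho,\tau$ satisfying $\rho\kappa<1$.

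For (a), each coordinate $\norm{a_i}_2^2=\sum_{k=1}^n\abs{a_{i,k}}^2$ is a sum of $n$ independent subexponential variables of mean $1$, so a Bernstein inequality gives $\prob{\abs{\norm{a_i}_2^2-n}\ge\delta n}\le 2\exp(-c\delta^2 n)$ for fixed $\delta\in(0,1)$. A union bound over $i\in[N]$ costs a factor $N$, absorbed by the exponential precisely under the hypothesis $n\gtrsim\log(N)$; on the resulting event every $\norm{a_i}_2^2\in[(1-\delta)n,(1+\delta)n]$, so $\kappa\le(1+\delta)/(1-\delta)$ is an absolute constant that can be driven arbitrarily close to $1$ by shrinking $\delta$. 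On the same event $\smallnorm{\A^*(\Id)}_\infty=\max_i\norm{a_i}_2^2\asymp n$, while $\norm{\Id}\dualnormsymbol=\norm{\Id}_\fro=\sqrt n$ because the Frobenius norm is self-dual; therefore $\theta=\smallnorm{\A^*(\Id)}_\infty^{-1}\norm{\Id}\dualnormsymbol\asymp n^{-1/2}$.

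Step (b) is the crux. Since $v_i\in\R$ and each $a_ia_i^\herm$ is Hermitian, $\A(v)$ is Hermitian; writing $a_ia_i^\herm=(a_ia_i^\herm-\Id)+\Id$ splits $\A$ into the centered operator $\A_c(v)=\sum_i v_i(a_ia_i^\herm-\Id)$ plus the bias $(\sum_i v_i)\Id$. Projecting onto the orthogonal complement of $\Id$ in $(\C^{n\times n},\scp{\cdot}{\cdot})$ annihilates the bias, so $\norm{\A(v)}_\fro\ge\norm{P_{\Id^\perp}\A_c(v)}_\fro$, and it suffices to lower bound the latter uniformly over the NSP cone. I would prove a one-sided restricted-isometry estimate $\norm{P_{\Id^\perp}\A_c(v)}_\fro\ge c_-\,n\,\norm{v}_2$ valid simultaneously for all $v$ supported on sets of size $\lesssim s$ (the factor $n$ reflects that each off-diagonal entry $[\A_c(v)]_{jk}$, $j\ne k$, has unit variance and there are $\asymp n^2$ of them), and then pass from this lower isometry to the robust NSP in the standard way, obtaining \eqref{eq:NSP} with $\tau\asymp 1/n$ and $\rho$ small enough that $\rho\kappa<1$. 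Because the columns $a_ia_i^\herm-\Id$ have subexponential (heavy-tailed) entries, the uniform bound cannot come from a naive net union bound; the natural tool is Mendelson's small-ball method, combining a lower bound on the marginal tails of $\abs{\scp{a_ia_i^\herm-\Id}{X}}$ with a bound on a multiplier empirical process indexed by the effectively-$s$-sparse vectors. Here $m=2n(n-1)\asymp n^2$ is the effective dimension of the centered measurements, and chaining the heavy-tailed process over this index set produces the logarithmic loss, giving a valid lower isometry on an event of probability $\ge 1-2\exp(-c_1 n)$ exactly when $s\lesssim m\log^{-2}(N/m)$. Controlling this multiplier process for the self-Khatri--Rao design is the main technical obstacle, and is where \cite{Fengler:krrip:2019} and the chaining machinery of \cite{Kueng2018} enter.

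Finally I would intersect the event from (a) with that from (b); a union bound yields success probability at least $1-4\exp(-c_1 n)$, matching the statement. On this event Theorem~\ref{thm:main:nonneg} applies with $q=2$ and delivers \eqref{eq:thm:main:nonneg}. Substituting $\kappa=O(1)$, $\tau\asymp 1/n$ and $\theta\asymp n^{-1/2}$ turns $C'$ and $D'\kappa$ into absolute constants and rewrites the noise term as
\begin{equation*}
  \frac{D'\kappa}{s^{1/2-1/p}}\Bigl(\tau+\frac{\theta}{\sqrt s}\Bigr)\norm{E}_\fro
  \asymp \frac{1}{s^{1/2-1/p}}\Bigl(\frac{c_4}{n}+\frac{1}{\sqrt{ns}}\Bigr)\norm{E}_\fro,
\end{equation*}
which is exactly the second summand of \eqref{eq:thm:main:subgaussian} once a factor $1/n$ is pulled out and the terms are collected as $c_4+\sqrt{n/s}$; the approximation term matches directly. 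This completes the deduction modulo the lower-isometry estimate of step (b).
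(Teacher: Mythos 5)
Your overall architecture coincides with the paper's: reduce to the deterministic Theorem~\ref{thm:main:nonneg} with $q=2$ and the Frobenius norm, verify the $\mathcal{M}^+$-criterion with $T$ a multiple of $\Id_n$ by concentrating $\norm{a_i}_2^2$ around $n$ (your step (a) is exactly Lemma~\ref{lem:M+AM}, proved there via Hanson--Wright plus a union bound, which is where $n\gtrsim\log N$ enters), and your final bookkeeping $\kappa=O(1)$, $\norm{T}\dualnormsymbol\asymp\sqrt n$, $\theta\asymp n^{-1/2}$, NSP constant $\asymp 1/n$ reproduces the paper's assembly, including pulling out the factor $1/n$ to get $c_4+\sqrt{n/s}$. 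The problem is step (b): you explicitly defer the lower-isometry/NSP estimate (``modulo the lower-isometry estimate of step (b)''), and that estimate \emph{is} the substance of Theorem~\ref{thm:main:subgaussian}; everything else is routine given Theorem~\ref{thm:main:nonneg}. So as it stands the proposal is a correct reduction, not a proof.

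Moreover, the tool you name for the crux is mismatched to the dependence structure. Mendelson's small-ball method lower bounds $\inf_v\norm{\Phi v}_2$ when the scalar measurements (the \emph{rows} of $\Phi$) form an independent ensemble; here the $m=2n(n-1)$ real entries of $P(\A_c(v))$ are all built from the \emph{same} $N$ vectors $a_1,\dots,a_N$ and are strongly dependent --- it is the \emph{columns} $P(a_ia_i^\herm)$ that are independent. Relatedly, the marginals you propose to use, $\scp{a_ia_i^\herm-\Id}{X}$ with a test matrix $X$, are column-against-dual pairings and would control adjoint-type quantities, not the restricted lower isometry of $v\mapsto\A(v)$. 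The paper resolves exactly this obstacle by removing the entire diagonal with the map $P$ of \eqref{eq:def:P} (not just projecting out the $\Id$ direction) and invoking Theorem~\ref{thm:RIP:heavy}, the result of \cite{Adamczak2011} tailored to matrices with independent subexponential \emph{columns}, applied to $\Phi=m^{-1/2}P\circ\A$; RIP then yields the NSP via Theorem~\ref{thm:RIPNSP}. Crucially, using that theorem requires a further nontrivial ingredient absent from your sketch: concentration of the column norms $\norm{X_i}_2^2$ around $m$, which is a fourth-order polynomial in the subgaussian coordinates and is handled in Corollary~\ref{cor:tailPaa} via the polynomial concentration inequality of Theorem~\ref{thm:concentrationPoly}; this term directly drives the probability bound \eqref{eq:adam_error}. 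Filling your step (b) along the lines you indicate would essentially force you to re-derive such a heavy-tailed-column RIP from scratch, so the gap you acknowledge is not a technicality but the theorem itself.
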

The proof of this theorem will be presented in Section
\ref{sec:nonneg:subgaussian}. We have not optimized the constants but
some concrete numbers are for example $c_2=11.36$, $c_3=15.55$ and
$c_4=3.07$, more details are in the proof below. The constants $C_1$ and $c_1$
depend on the subgaussian norm $\psi_2$ in Model \ref{model:aa} and
can also be obtained from the proof.

\subsection{Motivating Application}
We will briefly mention an application of the results above in the area of
wireless communication \cite{Hag:isit18,Che2019,alex2019nonbayesian}. An important task in
wireless networks is to estimate the nonnegative large-scale
path-gain coefficients (product of transmit power and attenuation due
to path-loss) and user activity using multiple antennas.  Here, a small
subset of $s\ll N$ devices indicate activity by transmitting specific
length-$n$ sequences which superimpose at each receive antenna with
individual and unknown instantaneous channel coefficients. Let us
denote this nonnegative vector of large-scale path-gains by
$\gamma\in\R^N$ and due to activity $\gamma$ is essentially
$s$--sparse. For a single receive antenna, the received (noiseless)
signal would be:
\begin{equation*}
  y=A\diag(\sqrt{\gamma})h
\end{equation*}
Here $h\in\C^N$ is the vector of unknown small-scale fading
coefficients and $A=(a_1|\dots|a_N)\in\C^{n\times N}$ is the matrix
containing all the sequences $a_i$ registered in the network (in real
applications for example pseudo-random sequences seeded by the device
id). Well-known results in compressed sensing show  that when using
sufficiently random sequences  of length
$n \simeq s\cdot \text{polylog}(N)$ for given $s$ and $N$, one can recover per antenna
w.h.p. the complex-valued channel coefficients $\diag(\sqrt{\gamma})h$
and the activity pattern (the essential support).

However, since in future even the number of active devices $s$ will
grow considerably, the question is how to further gain from a massive
number of receive antennas {\em when one is only interested in
  reconstructing $\gamma$ or its support}.
A very promising approach is to recover then the sparse and non-negative vector
from covariance information, an approach which has been investigated
already in \cite{Pal2015d}.

In more detail, assuming that the
small-scale fading coefficient vectors $h$ for different receive
antennas and different users are uncorrelated, we can view the
received signal $y$ at each receive antenna as a new realization of
the same random process having a covariance matrix which is
parametrized by $\gamma$, i.e., this leads precisely to the following
covariance model:
\begin{equation*}
  \A(\gamma)=\mathbb{E} yy^*=A\diag(\gamma)A ^*
\end{equation*}
Here $\gamma$ is an unknown nonnegative and sparse parameter which
should match (in a reasonable norm) the empirical covariance 
\begin{equation*}
  Y=\frac{1}{M}\sum_{k=1}^My_ky_k^*\overset{(!!)}{=}\A(\gamma)+E
\end{equation*}
computed from the received vectors $\{y_k\}_{k=1}^M\subset\C^n$ at $M$
receive antennas. The error $E$ accounts therefore for the fact of
having only finite $M$ (and obviously further unknown disturbances
like adversarial noise and interference always present in communication
systems). Note that the error $E$ above usually depends then on the
unknown parameter $\gamma$ as well.

Our result, Theorem \ref{thm:main:subgaussian}, now shows that
pathloss coefficients and activity of up to
$s\leq O(n^2 / \log^2(N/n^2))$ devices can be robustly recovered from
the empirical covariance $Y$ over sufficiently many receive antennas
when matching the model in the Frobenius norm. Note that, although not
further considered in this work, errors due to having finite $M$ will
vanish with increasing $M$ for moderate assumptions on the
distribution of $h$ and one could obviously also make concrete
statements about the concentration of $\|E\|_\fro$ in
\eqref{eq:thm:main:subgaussian} in terms of $M$, see
\cite{alex2019nonbayesian}.

\section{Generic Nonnegative Recovery Guarantee via the Nullspace Property}
\label{sec:nonneg:generic}

In this section, we are following \cite{Kueng2018} aiming towards
showing Theorem \ref{thm:main:nonneg} which is a more general and
refined version of the deterministic guarantee given in
\cite{Kueng2018}.  The proof of Theorem \ref{thm:main:nonneg} is given
at the end of this section. First, we will need \cite[Theorem
4.25]{Foucart2013}:
\begin{thm}[Theorem 4.25 in \cite{Foucart2013}]\label{thm:1}
  Let $q \in [1,\infty)$ 
  and $s \in \N$. Assume $\A$ satisfies the $\ell^q$-NSP of order $s$
  with respect to $\norm{\cdot}$ and with constants
  $\rho \in (0,1)$ and  $\tau > 0$. Then, for any $p \in [1,q]$ and for all
  $x,z \in \R^N$,
  \begin{align}
    \norm{x - z}_p
    &\leq \frac{C(\rho)}{s^{1-1/p}} \br{\norm{z}_1 - \norm{x}_1 + 2 \sigma_s(x)_1} \nonumber\\
    &+ D(\rho)\tau s^{1/p - 1/q} \norm{\A (x-z)}
      \label{eq:thm:1}
  \end{align}
  holds, where $C(\rho) \coloneqq \frac{( 1 + \rho )^2}{1 - \rho}\leq
  \frac{(3 + \rho)}{1 - \rho}=:D(\rho)$.
\end{thm}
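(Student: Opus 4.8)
The plan is to run the classical robust-nullspace argument, specialized to the hypothesized $\ell^q$-NSP \eqref{eq:NSP}. Throughout set $v\coloneqq x-z$ and let $S\subseteq[N]$ be an index set of $s$ largest-magnitude entries of $x$, so that $\norm{x_{S^c}}_1=\sigma_s(x)_1$. The argument decomposes into three independent pieces that are then assembled: an $\ell^1$ ``cone'' estimate that produces the data term $\norm{z}_1-\norm{x}_1+2\sigma_s(x)_1$; an $\ell^p$ consequence of the NSP obtained via H\"older; and a shifting inequality that converts the $\ell^1$ control of the tail into $\ell^p$ control of the full error.

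First I would establish the cone inequality by pure triangle-inequality bookkeeping. Writing $z=x-v$ and splitting along $S$ and $S^c$, the reverse triangle inequality gives $\norm{z}_1\geq \norm{x_S}_1-\norm{v_S}_1+\norm{v_{S^c}}_1-\norm{x_{S^c}}_1$, and inserting $\norm{x_S}_1=\norm{x}_1-\sigma_s(x)_1$ and $\norm{x_{S^c}}_1=\sigma_s(x)_1$ rearranges to
\[ \norm{v_{S^c}}_1 \leq \norm{v_S}_1 + \eta, \qquad \eta\coloneqq \norm{z}_1-\norm{x}_1+2\sigma_s(x)_1. \]
This is the only step involving the $\ell^1$ norms of $x$ and $z$; everything afterwards is deterministic NSP manipulation, agnostic to how $z$ is chosen. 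Next, from \eqref{eq:NSP} together with H\"older on the at-most-$s$-sparse vector $v_S$ (namely $\norm{v_S}_p\leq s^{1/p-1/q}\norm{v_S}_q$ for $p\leq q$), I obtain the $\ell^p$ form of the NSP, valid for any $|S|\leq s$,
\[ \norm{v_S}_p \leq \frac{\rho}{s^{1-1/p}}\norm{v_{S^c}}_1 + \tau\, s^{1/p-1/q}\norm{\A(v)}, \]
and in particular, at $p=1$, $\norm{v_S}_1\leq \rho\norm{v_{S^c}}_1+\tau s^{1-1/q}\norm{\A(v)}$. Substituting this last bound into the cone inequality and using $\rho<1$ to absorb $\norm{v_{S^c}}_1$ onto the left yields the central estimate $\norm{v_{S^c}}_1 \leq (1-\rho)^{-1}\bigl(\eta + \tau s^{1-1/q}\norm{\A(v)}\bigr)$.

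The main obstacle is the final passage from $\norm{v_{S^c}}_1$ to the full $\ell^p$ error $\norm{v}_p=\norm{x-z}_p$: because $S$ is aligned with the large entries of $x$ rather than of $v$, the quantity $\norm{v_{S^c}}_p$ cannot be bounded by $s^{-(1-1/p)}\norm{v_{S^c}}_1$ directly. I would resolve this with the standard shifting inequality: sort $S^c$ into consecutive blocks $S_1,S_2,\dots$ of size $s$ by decreasing magnitude of $v$, so that $\norm{v_{S_{j}}}_p\leq s^{1/p-1}\norm{v_{S_{j-1}}}_1$ and hence $\sum_{j\geq 2}\norm{v_{S_j}}_p\leq s^{-(1-1/p)}\norm{v_{S^c}}_1$, while the head block $v_{S_1}$ is handled by the $\ell^p$-NSP with $\norm{v_{S_1^c}}_1\leq(1+\rho)\norm{v_{S^c}}_1+\tau s^{1-1/q}\norm{\A(v)}$ (the latter again via the $p=1$ bound). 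Assembling $\norm{v}_p\leq \norm{v_S}_p+\norm{v_{S_1}}_p+\sum_{j\geq2}\norm{v_{S_j}}_p$ collapses the coefficient of $s^{-(1-1/p)}\norm{v_{S^c}}_1$ to exactly $(1+\rho)^2$ and the noise coefficient to $(2+\rho)\tau s^{1/p-1/q}$; this is precisely where the square in $C(\rho)$ originates. Substituting the central estimate for $\norm{v_{S^c}}_1$ and simplifying --- where the identities $(1+\rho)^2+(2+\rho)(1-\rho)=3+\rho$ and $s^{-(1-1/p)}s^{1-1/q}=s^{1/p-1/q}$ make the constants land on $C(\rho)=\tfrac{(1+\rho)^2}{1-\rho}$ and $D(\rho)=\tfrac{3+\rho}{1-\rho}$ --- delivers \eqref{eq:thm:1}. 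Finally, the claimed ordering $C(\rho)\leq D(\rho)$ reduces to $(\rho+2)(\rho-1)\leq0$, which holds on $[0,1]$.
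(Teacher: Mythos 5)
Your proof is correct, and all the delicate bookkeeping checks out: the coefficient of $s^{1/p-1}\norm{v_{S^c}}_1$ in your assembly is $\rho+\rho(1+\rho)+1=(1+\rho)^2$, the noise coefficient is $(2+\rho)\tau s^{1/p-1/q}$, and the identity $(1+\rho)^2+(2+\rho)(1-\rho)=3+\rho$ lands exactly on $D(\rho)$; the auxiliary bound $\norm{v_{S_1^c}}_1\leq(1+\rho)\norm{v_{S^c}}_1+\tau s^{1-1/q}\norm{\A(v)}$ and the shifting estimate $\sum_{j\geq2}\norm{v_{S_j}}_p\leq s^{1/p-1}\norm{v_{S^c}}_1$ are both valid as stated.

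Note, however, that the paper does not prove this statement at all: it is quoted as Theorem 4.25 of the Foucart--Rauhut book, with the proof deferred to that reference. Compared with the textbook argument, your route differs in the final $\ell^1\!\to\ell^p$ conversion. The standard proof first establishes the pure $\ell^1$ error bound $\norm{v}_1\leq\frac{1+\rho}{1-\rho}\,\eta+\frac{2\tau s^{1-1/q}}{1-\rho}\norm{\A(v)}$ (their Theorem 4.20, built from exactly your cone inequality and the $p=1$ form of \eqref{eq:NSP}), then re-chooses $S$ as the $s$ largest entries of $v$ and splits $\norm{v}_p\leq\norm{v_S}_p+\sigma_s(v)_p$, bounding the tail by the Stechkin-type estimate $\sigma_s(v)_p\leq s^{1/p-1}\norm{v}_1$ and the head by the NSP via H\"older. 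That version invokes the NSP only twice and needs no block structure. You instead keep $S$ aligned with the large entries of $x$ throughout and replace the Stechkin lemma by the Cand\`es-style shifting decomposition of $S^c$ into blocks sorted by $v$, paying with a third NSP application (to the head block $S_1$) and the extra bound on $\norm{v_{S_1^c}}_1$. Both assemblies produce identical constants $C(\rho)=\frac{(1+\rho)^2}{1-\rho}$ and $D(\rho)=\frac{3+\rho}{1-\rho}$, so nothing is lost quantitatively; the textbook route is marginally leaner, while yours is self-contained (it never invokes the Stechkin bound as a black box) and makes visible where the square in $C(\rho)$ is generated.
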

First we show a modified version of \cite[Lemma 5]{Kueng2018}.  Recall
that for a diagonal matrix
$W = \diag(w_1, \ldots, w_N) \in \R^{N\times N}$ considered as a
linear operator from $\R^N$ to $\R^N$ equipped with $\norm{\cdot}_p$
for any $p \in [1, \infty ]$, the operator norm is given as
$\norm{W}_o = \max \set{\abs{w_1}, \ldots, \abs{w_N}}$. Furthermore,
$W$ is invertible if and only if all the diagonal entries are nonzero,
with inverse
$W^{-1} = \diag (\frac{1}{|w_1|}, \ldots,
\frac{1}{|w_N|})$.
Thus, in this case we can also write the condition number
in Definition \ref{def:kappa} as $\kappa(w)=\|W\|_o\|W^{-1}\|_o$.

\begin{lem}\label{lem:1}
  Let $W = \diag(w)$ for some
  $0<w\in\R^N$. If $\A$ satisfies the assumption in
  Theorem \ref{thm:1} and
  $\kappa=\kappa(w) <
  \frac{1}{\rho}$, then $\A \circ W^{-1}$ satisfies the
  $\ell^q$-NSP of order $s$ with respect to $\norm{\cdot}$ and
  with constants $\tilde{\rho} \coloneqq \kappa \rho$,
  $\tilde{\tau} \coloneqq \norm{W}_o \tau$.
\end{lem}
\begin{proof}
  Let $S \subseteq [N]$ with $|S|\leq s$ and $v \in \R^N$. Since $W$ is diagonal, we have $(W^{-1}v)_S=W^{-1}v_S$ (same for $S^c$). We get:
  \begin{align*}
    &\norm{v_S}_q	\leq\norm{W}_o \norm{(W^{-1}v)_S}_q	\\
    &\leq	\norm{W}_o\br{\frac{\rho}{s^{1-1/q}}\norm{(W^{-1}v)_{S^c}}_1 + \tau \norm{\A (W^{-1}v)}	} \\
    &\leq \frac{\norm{W}_o\norm{W^{-1}}_o\rho}{s^{1-1/q}}\norm{v_{S^c}}_1	+  \norm{W}_o \tau \norm{\A(W^{-1}v)} \\
    &= \frac{\tilde{\rho}}{s^{1-1/q}}\norm{v_{S^c}}_1	+	\tilde{\tau}\norm{(\A\circ W^{-1})v}
  \end{align*}
\end{proof}
The next lemma is a generalization of \cite[Lemma 6]{Kueng2018}.
\begin{lem}\label{lem:2}
  Assume $w \coloneqq \A^*(T) \in \R^N$ is strictly positive for some
  $T \in \C^{n \times n}$ and set $W \coloneqq \diag(w)$. For any nonnegative $x,z \in \R^N$ it holds that
  \begin{equation*}
    \norm{Wz}_1 - \norm{Wx}_1	\leq \norm{T}\dualnormsymbol\norm{\A (z-x)}.
  \end{equation*}
\end{lem}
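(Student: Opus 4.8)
The plan is to exploit nonnegativity to linearize the two $\ell^1$-norms, after which the claim reduces to the adjoint identity followed by a single dual-norm estimate. Since $w=\A^*(T)>\vc{0}$ and $x,z\geq\vc{0}$ componentwise, the vectors $Wz$ and $Wx$ have nonnegative entries $(Wz)_i=w_iz_i$ and $(Wx)_i=w_ix_i$. Hence their $\ell^1$-norms are simply the sums of their entries, so that $\norm{Wz}_1-\norm{Wx}_1=\scp{w}{z}-\scp{w}{x}=\scp{w}{z-x}$, where $\scp{\cdot}{\cdot}$ here denotes the standard real inner product on $\R^N$. This is the only place the nonnegativity hypothesis enters, and it is exactly what removes the absolute values that would otherwise obstruct the estimate.

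Next I would pass to the matrix side via the adjoint. By definition of $w=\A^*(T)$, and of the adjoint $\A^*$ taken with respect to the real inner product on the domain, $\scp{w}{z-x}=\scp{\A^*(T)}{z-x}=\re\scp{T}{\A(z-x)}$, where on the right $\scp{\cdot}{\cdot}$ is now the Hilbert--Schmidt inner product $\scp{X}{Y}=\trace{X^*Y}$. Finally the dual-norm inequality, immediate from the definition $\norm{Y}\dualnormsymbol=\sup_{\norm{X}\leq1}\scp{Y}{X}$ together with homogeneity, gives $\re\scp{T}{\A(z-x)}\leq\abs{\scp{T}{\A(z-x)}}\leq\norm{T}\dualnormsymbol\norm{\A(z-x)}$. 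Chaining these (in)equalities yields exactly the asserted bound.

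There is no genuine obstacle here; the statement is essentially a bookkeeping identity built on the $\M^+$-structure. The one point requiring care is the complex-valued nature of the Hilbert--Schmidt inner product: since $w$ and $z-x$ are real while $\A$ takes values in $\C^{n\times n}$, the adjoint is defined so that $\A^*(T)\in\R^N$ and the real pairing $\scp{w}{z-x}$ equals the real part of $\scp{T}{\A(z-x)}$. Taking the real part (which cannot increase the modulus) before invoking the dual-norm bound is precisely what makes the final inequality valid irrespective of the phase of $\scp{T}{\A(z-x)}$.
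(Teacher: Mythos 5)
Your proof is correct and follows essentially the same route as the paper's: nonnegativity turns the $\ell^1$-norms into linear functionals $\scp{w}{\cdot}$, the adjoint identity moves the pairing to the matrix side, and the dual-norm definition finishes the bound. Your explicit handling of the real part of the Hilbert--Schmidt pairing (writing $\scp{w}{z-x}=\re\scp{T}{\A(z-x)}$ before estimating by $\norm{T}\dualnormsymbol\norm{\A(z-x)}$) is in fact slightly more careful than the paper, which writes the identity without the real part.
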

\begin{proof}
  Let $x,z \in \R^N$ be nonnegative. By construction, we have $W = W^*$ and $Wx$ is nonnegative. This implies
  \begin{align*}
    \norm{Wz}_1	&=	\scp{1_N}{Wz}	=	\scp{w}{z}	\\
                &=	\scp{\A^*(T)}{z}	=	\scp{T}{Az}
  \end{align*}
  where $1_N$ denotes the vector in $\R^N$ containing only ones. With an analogous reformulation for $x$ we get
  \begin{align*}
    \norm{Wz}_1 - \norm{Wx}_1	&=	\scp{T}{\A (z-x)}\\
                                &\leq	\norm{T}\dualnormsymbol\norm{\A(z-x)}.
  \end{align*}
\end{proof}
We can now show a more general version of
\cite[Theorem 3]{Kueng2018} which holds for general $p\in [1,\infty)$
and generic norms on matrices. It parallels Theorem \ref{thm:1} in
the nonnegative case.
\begin{thm}\label{thm:M+}
  Suppose that $\A$ satisfies the assumptions in Theorem
  \ref{thm:1} and that there exists some ${T} \in \C^{n\times n}$ such that
  $\A^*({T})$ is strictly positive. Set $W\coloneqq\diag(\A^*(T))$ and $\kappa\coloneqq\kappa(\A^*(T))=\norm{W}_o\norm{W^{-1}}_o$. If  $\kappa\rho < 1$, then
  \begin{align*}
    \norm{z-x}_p
    &\leq
      \frac{2C(\kappa\rho)\kappa}{s^{1-1/p}}\sigma_s(x)_1	\\
    &+	\frac{D(\kappa\rho)}{s^{1/q-1/p}}
    \Bigbr{\kappa\tau +
      \frac{\norm{W^{-1}}_o \norm{T}\dualnormsymbol }{s^{1-1/q}} }\norm{\A (z-x)}
  \end{align*}
  holds for all $p \in [1,q]$ and all nonnegative $x,z \in \R^N$. 
\end{thm}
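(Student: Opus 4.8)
The plan is to reduce everything to Theorem~\ref{thm:1} via the diagonal change of variables induced by $W$, which is precisely what Lemmas~\ref{lem:1} and~\ref{lem:2} were set up to enable. Since the hypothesis $\kappa\rho<1$ is equivalent to $\kappa<1/\rho$, Lemma~\ref{lem:1} applies and shows that the reweighted map $\A\circ W^{-1}$ satisfies the $\ell^q$-NSP of order $s$ with the modified constants $\tilde\rho\coloneqq\kappa\rho\in(0,1)$ and $\tilde\tau\coloneqq\norm{W}_o\tau$. This is the crucial point: it makes Theorem~\ref{thm:1} available for the reweighted problem, with $\tilde\rho$ still strictly below $1$.

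Next I would apply Theorem~\ref{thm:1} to the map $\A\circ W^{-1}$ with the two vectors $Wx$ and $Wz$ in place of $x$ and $z$ (Theorem~\ref{thm:1} holds for \emph{all} vectors, so no sign constraint is needed here). Because $W$ is diagonal and invertible, $(\A\circ W^{-1})(Wx)=\A(x)$, so the measurement term collapses to $\norm{(\A\circ W^{-1})(Wx-Wz)}=\norm{\A(x-z)}$, and Theorem~\ref{thm:1} yields
\begin{align*}
\norm{Wx-Wz}_p &\leq \frac{C(\tilde\rho)}{s^{1-1/p}}\br{\norm{Wz}_1-\norm{Wx}_1+2\sigma_s(Wx)_1}\\
&\quad + D(\tilde\rho)\tilde\tau\, s^{1/p-1/q}\norm{\A(x-z)}.
\end{align*}
To pass from $\norm{Wx-Wz}_p$ back to the quantity $\norm{z-x}_p$ we want to bound, write $x-z=W^{-1}(Wx-Wz)$ and use submultiplicativity of the operator norm to get $\norm{z-x}_p\leq\norm{W^{-1}}_o\norm{Wx-Wz}_p$.

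It then remains to control the two problematic terms on the right-hand side. The $\ell^1$-gap $\norm{Wz}_1-\norm{Wx}_1$ is bounded by $\norm{T}\dualnormsymbol\norm{\A(x-z)}$ via Lemma~\ref{lem:2}; this is the only place where nonnegativity of $x$ and $z$ enters. The best $s$-term error satisfies $\sigma_s(Wx)_1\leq\norm{W}_o\sigma_s(x)_1$, which follows by evaluating the defining minimum at the support that is optimal for $x$ and then using $\norm{Wv}_1\leq\norm{W}_o\norm{v}_1$. Substituting both bounds, using $\tilde\tau=\norm{W}_o\tau$ and $\norm{W}_o\norm{W^{-1}}_o=\kappa$, the $\sigma_s(x)_1$-contribution collects to $\frac{2C(\kappa\rho)\kappa}{s^{1-1/p}}\sigma_s(x)_1$, while the $\norm{\A(x-z)}$-coefficient becomes $D(\kappa\rho)\kappa\tau\, s^{1/p-1/q}+C(\kappa\rho)\norm{W^{-1}}_o\norm{T}\dualnormsymbol s^{1/p-1}$. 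Finally, bounding $C(\kappa\rho)\leq D(\kappa\rho)$ in this last coefficient (the inequality recorded in Theorem~\ref{thm:1}) and rewriting $s^{1/p-1/q}=s^{-(1/q-1/p)}$ and $s^{1/p-1}=s^{-(1/q-1/p)}s^{-(1-1/q)}$ reproduces exactly the claimed estimate. I do not expect a genuine obstacle beyond this bookkeeping; the only step requiring a little care is the reduction $\sigma_s(Wx)_1\leq\norm{W}_o\sigma_s(x)_1$, since the support optimal for $Wx$ need not coincide with the one optimal for $x$ — but only this one-sided inequality is needed, and it follows by testing the minimum against the latter support.
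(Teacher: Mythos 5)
Your proposal is correct and follows the paper's own proof essentially step for step: Lemma~\ref{lem:1} to transfer the $\ell^q$-NSP to $\A\circ W^{-1}$ with parameters $(\kappa\rho,\norm{W}_o\tau)$, Theorem~\ref{thm:1} applied to $Wx$ and $Wz$, Lemma~\ref{lem:2} together with $\sigma_s(Wx)_1\leq\norm{W}_o\sigma_s(x)_1$, the bound $C(\kappa\rho)\leq D(\kappa\rho)$ to merge the coefficients, and the final pull-back $\norm{z-x}_p\leq\norm{W^{-1}}_o\norm{W(z-x)}_p$. Your closing caveat about the optimal support for $Wx$ versus $x$ is exactly the one-sided inequality the paper invokes without comment, so there is no gap.
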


Note that we used here the definition of $C(\rho)$ and $D(\rho)$ from
Theorem \ref{thm:1}. Using this result for $p=q=2$ and with
$s^{1-1/q}\geq 1$ yields essentially \cite[Theorem 3]{Kueng2018}.
\begin{proof}
  Let $p \in [1,q]$ and $x,z \in \R^N$ be nonnegative. By Lemma \ref{lem:1}, $\A\circ W^{-1}$ satisfies the NSP with parameters $\tilde{\rho}=\kappa\rho$ and
  $\tilde{\tau}=\|W\|_o\tau =\norm{W}_o\tau$.  Therefore,
  we can now use Theorem \ref{thm:1} for $Wx$ and $Wz$ (instead of $x$
  and $z$) and $\A\circ W^{-1}$ (instead of $\A$):
  \begin{align*}
    \|W&(z-x)\|_p \\
    \overset{\eqref{eq:thm:1}}{\leq}&	C(\kappa\rho)\frac{\norm{Wz}_1 - \norm{Wx}_1 +2\sigma_s(Wx)_1}{s^{1-1/p}}\\
       &+D(\kappa\rho)\norm{W}_o\tau s^{1/p-1/q}\norm{\A(z-x)}
  \end{align*}
  By Lemma \ref{lem:2} and invoking $\sigma_s(Wx)_1 \leq \norm{W}_o \sigma_s(x)_1$, this is at most
  \begin{align*}
    &C(\kappa\rho)\frac{\|{T}\|\dualnormsymbol\norm{\A (z-x)} +2\norm{W}_o\sigma_s(x)_1}{s^{1-1/p}}\\
    +&D(\kappa\rho)\norm{W}_o\tau s^{1/p-1/q}\norm{\A(z-x)}
  \end{align*}
  which we can further upper bounded by using $C(\kappa\rho)\leq D(\kappa\rho)$ by:
  \begin{align*}
    &2 C(\kappa\rho)\norm{W}_o\frac{\sigma_s(x)_1}{s^{1-1/p}}\\
    +&D(\kappa\rho)s^{1/p-1/q}\br{\norm{W}_o\tau +
      \frac{\norm{T}\dualnormsymbol}{s^{1-1/q}}}\norm{\A(z-x)}.
  \end{align*}
  This yields
  \begin{align*}
    &\norm{z-x}_p\leq\|W^{-1}\|_o\norm{W(z-x)}_p\\
    &\leq	2C(\kappa\rho)\kappa\frac{\sigma_s(x)_1}{s^{1-1/p}}\\
    &+ \frac{D(\kappa\rho)}{s^{1/q-1/p}}\Bigbr{\kappa\tau +
      \frac{\norm{W^{-1}}_o\norm{T}\dualnormsymbol}{s^{1-1/q}}}\norm{\A(z-x)}
  \end{align*}
\end{proof}
%

\begin{proof}[Proof of Main Theorem  \ref{thm:main:nonneg}]
  Applying Theorem \ref{thm:M+} above for $Y=\A(x) + E$
  we obtain
  \begin{align*}
    &\|x^\sharp-x\|_p	
        \leq	\frac{2C(\kappa\rho)\kappa}{s^{1-1/p}}\sigma_s(x)_1\\
     &+
        \frac{D(\kappa\rho)}{s^{1/q-1/p}}\Bigbr{\kappa \tau + \frac{ \norm{W^{-1}}_o\norm{T}_o}{s^{1-1/q}}}
        \left(\norm{\A (z)-Y}+\norm{E}\right).
  \end{align*}
  This indeed suggests to use the convex program \eqref{eq:generic:nnnorm} for recovery.
  Obviously, the minimizer $x^\sharp$ of \eqref{eq:generic:nnnorm} fulfills
  \begin{equation*}
    \|A(x^\sharp)-Y\|\leq\|\A(x)-Y\|=\|E\|,
  \end{equation*}
  and therefore we have:
  \begin{equation}\begin{split}{}
    \smallnorm{x^\sharp-x}_p	
    &\leq	\frac{2C(\kappa\rho)\kappa}{s^{1-1/p}}\sigma_s(x)_1 \\
    &+  \frac{2D(\kappa\rho)}{s^{1/q-1/p}}\Bigbr{\kappa\tau + \frac{\norm{W^{-1}}_o \norm{T}\dualnormsymbol}{s^{1-1/q}}}\norm{E}
      \label{eq:main:unscaled}
      \end{split}
  \end{equation}
  Note that $T$ can be rescaled by any positive factor, the $\mathcal{M}^+$--criterion is still fulfilled and the terms $\kappa$ and $\norm{W^{-1}}_o \norm{T}\dualnormsymbol$ in Theorem \ref{thm:M+} above do not change. However, replacing $T$ by 
   $\norm{\A^*(T)}_\infty^{-1}\cdot T$, which yields $\norm{W}_o=1$ and therefore $\kappa = \smallnorm{W^{-1}}_o$, allows us to write \eqref{eq:main:unscaled} in the more convenient form \eqref{eq:thm:main:nonneg}.
\end{proof}

\section{The Rank-one and Sub-Gaussian Case}
\label{sec:nonneg:subgaussian}
In this part we will proof our second main result, Theorem
\ref{thm:main:subgaussian}. We will consider a random linear map
$\A:\R^N\rightarrow\C^{n\times n}$ given by Model \ref{model:aa},
i.e., with the special form
\begin{align}
  \A(x)=\sum_{i=1}^N x_i a_i a_i^\herm=:\sum_{i=1}^N x_i A_i
  \label{eq:defA:rankone}
\end{align}
where $A_i \coloneqq a_i a_i^\herm \in \C^{n\times n}$ are independent
random positive-semidefinite rankone matrices. The adjoint map $\A^*: \C^{n\times n}\rightarrow\R^N$ is given as
\begin{align*}
  \A^*(T)=(\scp{A_i}{T})_{i=1}^N	=	( a^{\herm}_i T a_i)_{i=1}^N.
\end{align*}
Note that, even in the more general case where $A_i$ are (non-zero)
positive semidefinite matrices, it can been easily seen that $\A$
fulfills the $\mathcal{M}^+$ criterion since for $T=\Id_n$ we get that
$\A^*(T)=(\trace A_i)_{i=1}^N>0$.

Now, according to Model \ref{model:aa},
$\{a_i\}_{i=1}^N$ are independent complex random vectors with
independent subgaussian real and imaginary part components.  To make
this precise we will need the following characterization.  For a
real-valued random variable $X$ and $r\in [1,\infty)$ define
\begin{equation}
  \norm{X}_{\psi_r}	\coloneqq	\inf\set{t>0 : \expec{\exp(\abs{X}^r/t^r)} \leq 2}.
   \label{eq:def:psip}
\end{equation}
This is a norm on the \textit{Orlicz space} of random variables $X$
with $\normpsi{X}{r}<\infty$. For $r=2$ these random variables are called \textit{sub-Gaussian} and for $r=1$ \textit{sub-exponential}. More information about these spaces can be found in \cite{Foucart2013} and \cite{Vershynin:datasciencebook} for example. 

\subsection{The ${\cal M}^+$--Criterion}
We already discussed above that the measurement map $\A$ in
\eqref{eq:defA:rankone} fulfills the $\mathcal{M}^+$-criterion (by
choosing $T=\Id_n$ or a scaled version).  However, its ``quality'' depends (for a chosen
$T$) on the condition number $\kappa$ which is a random variable.
We follow
the ideas of \cite{Kueng2018} again.

\begin{lem}\label{lem:M+AM}
  Assume that $\A$ is given by Model \ref{model:aa}.
  For a given $\eta\in(0,1)$ it holds with high probability at least
  \begin{equation*}
    1 - 2N\exp \bigbr{-\frac{c\eta^2}{2\psi_2^4}\cdot n },
  \end{equation*} 
  that for all $i\in[N]$
  \begin{align}\label{eq:M+bound1}
    & n(1-\eta)\leq  \norm{a_i}_2^2 \leq n(1+\eta),
  \end{align}
   where $c>0$ is the constant appearing in the Hanson-Wright inequality \eqref{eq:Hanson-Wright}.
  In particular,
  \begin{align}\label{eq:M+bound2}
    &\frac{\max_{i\in [N]} \norm{a_i}_2^2}{\min_{i\in [N]} \norm{a_i}_2^2}	\leq 	\frac{1 + \eta}{1 - \eta}.
  \end{align}
\end{lem}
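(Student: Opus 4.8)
The plan is to recognize $\norm{a_i}_2^2$ as a quadratic form in independent real subgaussian variables, apply the Hanson--Wright inequality \eqref{eq:Hanson-Wright} to each $i$, and conclude by a union bound over $i\ind$. First I would \emph{realify} each vector: stacking real and imaginary parts, set
\[
  b_i \coloneqq \br{\re(a_{i,1}),\im(a_{i,1}),\dots,\re(a_{i,n}),\im(a_{i,n})} \in \R^{2n},
\]
so that $\norm{a_i}_2^2 = \sum_{k\inn}\br{\re(a_{i,k})^2 + \im(a_{i,k})^2} = \norm{b_i}_2^2 = b_i^\top \Id_{2n}\, b_i$. By Model \ref{model:aa} the $2n$ entries of $b_i$ are independent, mean-zero, of variance $1/2$, and with subgaussian norm at most $\psi_2$, so that $\expec{\norm{b_i}_2^2} = 2n\cdot\tfrac12 = n$, which is exactly the center appearing in \eqref{eq:M+bound1}.

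Next I would feed the quadratic form $b_i^\top \Id_{2n}\, b_i$ into Hanson--Wright. For $A = \Id_{2n}$ the relevant matrix norms are $\norm{\Id_{2n}}_\fro^2 = 2n$ and operator norm $1$, so with subgaussian parameter $\psi_2$ the inequality yields, for every $t>0$,
\[
  \prob{\,\bigabs{\norm{a_i}_2^2 - n} > t\,}
  \leq 2\exp\!\br{-c\,\min\!\br{\tfrac{t^2}{2n\psi_2^4},\,\tfrac{t}{\psi_2^2}}}.
\]
Choosing $t = \eta n$ turns the two exponents into $\tfrac{c\eta^2 n}{2\psi_2^4}$ and $\tfrac{c\eta n}{\psi_2^2}$. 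Since $\eta < 1 \leq 2\psi_2^2$ (using $\psi_2\geq 1$), the former is the smaller, so the bound collapses to $2\exp(-\tfrac{c\eta^2 n}{2\psi_2^4})$, precisely the per-index failure probability in the statement. A union bound over the $N$ indices then gives \eqref{eq:M+bound1} with probability at least $1 - 2N\exp(-\tfrac{c\eta^2}{2\psi_2^4}\,n)$; and \eqref{eq:M+bound2} is immediate, since on this event the numerator is at most $n(1+\eta)$ and the denominator at least $n(1-\eta)$, whence the ratio is at most $\tfrac{1+\eta}{1-\eta}$.

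I do not expect a deep obstacle here, as the computation is routine once Hanson--Wright is invoked. The one point that needs care is the bookkeeping of the subgaussian constant through realification and, above all, the verification that over the whole admissible range $\eta\in(0,1)$, $\psi_2\geq 1$ it is the Frobenius (quadratic-in-$t$) branch rather than the operator-norm (linear-in-$t$) branch of the Hanson--Wright bound that is active. This is exactly what produces the clean Gaussian-type rate $\exp(-c\eta^2 n/(2\psi_2^4))$ instead of a merely sub-exponential one, and it is the step where the hypothesis $\psi_2\geq 1$ is genuinely used.
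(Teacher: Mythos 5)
Your proposal is correct and follows essentially the same route as the paper: apply the Hanson--Wright inequality to $\norm{a_i}_2^2$ viewed as the quadratic form $\tilde a_i^\top I_{2n}\tilde a_i$ of the realified vector (with $\smallnorm{I_{2n}}_F^2=2n$, $\smallnorm{I_{2n}}_o=1$), take $t=\eta n$, note that $\eta<1\leq 2\psi_2^2$ makes the Gaussian branch of the minimum active, and finish with a union bound over $i\in[N]$. Your explicit verification of which branch of the min is active, and the realification bookkeeping, are exactly the steps the paper performs (the latter stated in its appendix remark following the Hanson--Wright inequality), so there is nothing to add.
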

A variant of Lemma \ref{lem:M+AM} is possible for random vectors
beyond the iid. model if a convex concentration property hold, see
\cite{Adamczak2015}. Let us already indicate how we will use this
result later on. In the context of proving Theorem
\ref{thm:main:subgaussian} applied to Model \ref{model:aa} with
$T=t\Id_n$ we have
$\kappa = \frac{\max_{i\in [N]} \norm{a_i}_2^2}{\min_{i\in [N]}
  \norm{a_i}_2^2}$ and
$\norm{\A^*(T)}_\infty=t\max_{i\in [N]} \norm{a_i}_2^2$.  Thus, Lemma
\ref{lem:M+AM} allows us to control the terms related to
the $\mathcal{M}^+$-criterion. We will do this more explicitely below
when proving Theorem \ref{thm:main:subgaussian}.
\begin{proof}
  Note that $\expec{\|a_i\|_2^2}=n$.  We will show that
  with high probability it holds for all
  $i\in[N]$ that 
  \begin{equation*}
    |\|a_i\|_2^2 - n|	\leq	 \eta n.
  \end{equation*}
  This directly implies \eqref{eq:M+bound1} and \eqref{eq:M+bound2}.
  Using the Hanson-Wright inequality \eqref{eq:Hanson-Wright}
  (which is a Bernstein inequality in this case) yields that for all $i\ind$ it holds that
  \begin{align*}
    &\prob{\smallabs{ \norm{a_i}_2^2 - n} \geq n\eta}\\
    \leq&\,	2 \exp \big( -c n \min \{ \frac{\eta^2}{2\psi_2^4}, \frac{\eta}{\psi_2^2}\}\big)\\
    =& \,2\exp \bigbr{-\frac{c\eta^2}{2\psi_2^4}\cdot n },
  \end{align*}
  using $\psi_2\geq 1$ and $\eta<1$.
  As a remark, such a concentration may also hold for certain
  non-iid. models, see here the convex concentration property
  \cite{Adamczak2015}.  By taking the union bound it follows that
  \eqref{eq:M+bound1} and \eqref{eq:M+bound2}  hold with probability
  \begin{align*}
    &\geq 1 - 2N\exp \bigbr{-\frac{c\eta^2}{2\psi_2^4}\cdot n },
  \end{align*}
  depending on $\psi_2$, the dimensions $n$ and $N$ and some
  $\eta\in (0,1)$.
\end{proof}
\subsection{The Nullspace Property}
We will now establish that the $\ell^2$--NSP holds
with overwhelming probability once the sparsity $s$ is below a certain
threshold, in detail $s\lesssim m/\log^2(N/m)$ where $m=2n(n-1)$.
This resembles that the well-known compressed sensing phase
transition holds (up the order of the logarithm) also for such structured
random matrices.

It is well-known that the $\ell^2$--NSP is implied by the \textit{restricted isometry property}
(with respect to the $\ell^2$--norm).
\begin{defi}
  For $s\ind$, the \textit{restricted isometry constant}
  $\delta_s=\delta_s(\Phi)$ of order $s$  of a matrix
  $\Phi\in\C^{m\times N}$ is defined as the
  smallest $\delta\geq0$ satisfying
  \begin{equation}
    (1-\delta)\norm{x}_2^2	\leq	\norm{\Phi x}_2^2	\leq	(1+\delta)\norm{x}_2^2
  \end{equation}
  for all $s$-sparse vectors $x\in\R^N$, i.e. vectors with at most $s$
  non-zero components.  If $\delta_s(\Phi) <1$, the matrix $\Phi$ is
  said to have the {\em restricted isometry property} ($\ell^2$--RIP) of order $s$.
\end{defi}
If
$\delta_{2s}(\Phi)<1/\sqrt{2}$, then $s$-sparse vectors $x\in\R^N$ can be
recovered in a stable way from given measurements $\Phi x$ using
$\ell^1$-based convex algorithm (basis pursuit etc.) \cite{Cai2014}.  The following
theorem \cite[Theorem 6.13]{Foucart2013} shows the important
relation to the nullspace property.
\begin{thm}
  \label{thm:RIPNSP}
  If the 2sth restricted isometry constant
  $\delta_{2s}=\delta_{2s}(\Phi)$ of a matrix $\Phi\in\C^{m\times N}$
  obeys $\delta_{2s}\leq\delta <\frac{4}{\sqrt{41}}$, then $\Phi$
  satisfies the $\ell^2$--NSP of order $s$ with constants
  \begin{align}\label{eq:rho_tau_RIP}
    &\rho \leq \frac{\delta}{\sqrt{1-\delta^2}-\delta/4}\quad\text{and}\quad
      \tau \leq \frac{\sqrt{1+\delta}}{\sqrt{1-\delta^2}-\delta/4}.
  \end{align}
\end{thm}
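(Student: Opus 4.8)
The plan is to establish this by the classical argument that the restricted isometry property implies the robust nullspace property, being careful to track the sharp constants that produce the stated $\rho,\tau$ and the threshold $4/\sqrt{41}$. Here $q=2$, so $s^{1-1/q}=\sqrt{s}$, the relevant measurement norm is the Euclidean norm $\norm{\Phi v}_2$, and I must verify $\norm{v_S}_2\le\tfrac{\rho}{\sqrt s}\norm{v_{S^c}}_1+\tau\norm{\Phi v}_2$ for every $v\in\R^N$ and every $S\subseteq[N]$ with $\abs{S}\le s$. The first reduction is to the extremal case: since $\norm{v_S}_2\le\norm{v_{S_0}}_2$ and $\norm{v_{S_0^c}}_1\le\norm{v_{S^c}}_1$ whenever $S_0$ indexes the $s$ largest-magnitude entries of $v$, it suffices to prove the bound for $S=S_0$. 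I then partition the complement $S_0^c$ into consecutive index blocks $S_1,S_2,\dots$, each of size $s$ and ordered by decreasing magnitude, so that every entry of $v_{S_j}$ dominates every entry of $v_{S_{j+1}}$.

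The analytic core is a single inequality drawn from the RIP. Writing $\Phi v_{S_0}=\Phi v-\sum_{j\ge1}\Phi v_{S_j}$ and expanding $\norm{\Phi v_{S_0}}_2^2=\scp{\Phi v_{S_0}}{\Phi v}-\sum_{j\ge1}\scp{\Phi v_{S_0}}{\Phi v_{S_j}}$, I bound the measurement term by Cauchy--Schwarz together with the upper RIP estimate $\norm{\Phi v_{S_0}}_2\le\sqrt{1+\delta}\,\norm{v_{S_0}}_2$, and each cross term by the restricted-orthogonality consequence of the $2s$-RIP, $\abs{\scp{\Phi v_{S_0}}{\Phi v_{S_j}}}\le\delta\norm{v_{S_0}}_2\norm{v_{S_j}}_2$ (valid since $S_0$ and $S_j$ are disjoint with union of size $\le 2s$). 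Combined with the lower RIP bound on $\norm{\Phi v_{S_0}}_2^2$ and balanced carefully so as to keep the geometric mean rather than $1-\delta$, this yields $\sqrt{1-\delta^2}\,\norm{v_{S_0}}_2\le\sqrt{1+\delta}\,\norm{\Phi v}_2+\delta\sum_{j\ge1}\norm{v_{S_j}}_2$.

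It then remains to convert the block $\ell^2$-sum into an $\ell^1$-tail. Because the blocks are sorted, $\norm{v_{S_{j+1}}}_2\le\tfrac{1}{\sqrt s}\norm{v_{S_j}}_1$, and a sharp bookkeeping of these estimates gives $\sum_{j\ge1}\norm{v_{S_j}}_2\le\tfrac{1}{\sqrt s}\norm{v_{S_0^c}}_1+\tfrac14\norm{v_{S_0}}_2$. Substituting this into the core inequality and collecting the $\norm{v_{S_0}}_2$ terms produces $\br{\sqrt{1-\delta^2}-\tfrac{\delta}{4}}\norm{v_{S_0}}_2\le\sqrt{1+\delta}\,\norm{\Phi v}_2+\tfrac{\delta}{\sqrt s}\norm{v_{S_0^c}}_1$; dividing by the positive factor $\sqrt{1-\delta^2}-\delta/4$ gives exactly the stated $\rho$ and $\tau$, and monotonicity of these constants in $\delta$ upgrades $\delta_{2s}\le\delta$ to the same bound.

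I expect the main obstacle to lie in extracting the sharp constants rather than in the overall scheme: the crude version of the RIP step only delivers the factor $1-\delta$ in place of $\sqrt{1-\delta^2}$, and the naive bookkeeping delivers coefficient $1$ in place of $\tfrac14$ on $\norm{v_{S_0}}_2$; either suboptimality shrinks the admissible range of $\delta$. The threshold $\delta<4/\sqrt{41}$ is precisely what makes $\sqrt{1-\delta^2}-\delta/4>0$ and, equivalently, $\rho=\delta/(\sqrt{1-\delta^2}-\delta/4)<1$, so the delicate sorting estimate with the optimal $\tfrac14$ is exactly what buys the full threshold $4/\sqrt{41}$.
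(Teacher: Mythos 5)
You should first be aware that the paper does not prove this statement at all: it is quoted from Foucart--Rauhut and the paper's entire ``proof'' is the pointer to \cite[Theorem 6.13]{Foucart2013}. Measured against that book proof, your outline is faithful in its architecture: the reduction to the set $S_0$ of the $s$ largest entries, the partition of $S_0^c$ into sorted blocks, the blockwise cross-term treatment, the sharp estimate $\sum_{j\geq 1}\norm{v_{S_j}}_2\leq \frac{1}{\sqrt{s}}\norm{v_{S_0^c}}_1+\frac{1}{4}\norm{v_{S_0}}_2$, and the final assembly, which indeed produces exactly the stated $\rho$ and $\tau$ from your two displayed inequalities.

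However, there is a genuine gap at your core inequality $\sqrt{1-\delta^2}\,\norm{v_{S_0}}_2\leq\sqrt{1+\delta}\,\norm{\Phi v}_2+\delta\sum_{j\geq1}\norm{v_{S_j}}_2$. The ingredients you list --- Cauchy--Schwarz with the upper RIP bound on $\scp{\Phi v_{S_0}}{\Phi v}$, the restricted-orthogonality bound $\abs{\scp{\Phi v_{S_0}}{\Phi v_{S_j}}}\leq\delta\norm{v_{S_0}}_2\norm{v_{S_j}}_2$, and the lower RIP bound --- only yield, however you balance them, the scalar relations $X^2\leq Xe+\delta bt$ and $(1-\delta)b^2\leq X^2\leq(1+\delta)b^2$, where $X=\norm{\Phi v_{S_0}}_2$, $b=\norm{v_{S_0}}_2$, $e=\norm{\Phi v}_2$, $t=\sum_{j\geq1}\norm{v_{S_j}}_2$. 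These relations do \emph{not} imply your core inequality: take $e=0$, $X=\sqrt{1-\delta}\,b$, $t=(1-\delta)b/\delta$; both relations hold (with equality), yet the core inequality would force $\sqrt{1-\delta^2}\leq 1-\delta$, which is false for $\delta>0$. What these ingredients actually give is $(1-\delta)b\leq\sqrt{1+\delta}\,e+\delta t$, and feeding this through your bookkeeping yields only the threshold $\delta<4/9$, not $4/\sqrt{41}$. The missing idea is a finer coupling in which the cross terms are controlled by the slack $\norm{\Phi v_{S_0}}_2^2-(1-\delta)\norm{v_{S_0}}_2^2$ (via a quadratic in an auxiliary parameter, so that the restricted-orthogonality loss and the lower-RIP loss cannot be simultaneously extremal); your phrase ``balanced carefully so as to keep the geometric mean'' sits exactly where that idea is needed and does not substitute for it. Two smaller points: the coefficient $\frac14$ does not follow from $\norm{v_{S_{j+1}}}_2\leq\frac{1}{\sqrt s}\norm{v_{S_j}}_1$ alone (that gives coefficient $1$, as you note); it requires the sorted square-root lemma $(\sum_{i=1}^s a_i^2)^{1/2}\leq\frac{1}{\sqrt s}\sum_{i=1}^s a_i+\frac{\sqrt s}{4}(a_1-a_s)$ plus telescoping across blocks, which you assert but never supply. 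And $\delta<4/\sqrt{41}$ is not equivalent to $\sqrt{1-\delta^2}-\delta/4>0$ (that holds up to $4/\sqrt{17}$); it is precisely the condition $\rho<1$, which is what the nullspace property requires.
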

For the proof see \cite[Theorem 6.13]{Foucart2013}.
For example, as seen in Figure \ref{fig:nspparam}, $\delta=0.5$ gives
$\rho\lessapprox 0.7$, $\tau\lessapprox 1.5$ and the constants in Theorem \ref{thm:1}
are $C(\rho)\approx 8.6$ and $D(\rho)\approx 11.3$.
\begin{figure}
  \hspace*{-1em}
  \includegraphics[width=1.1\linewidth]{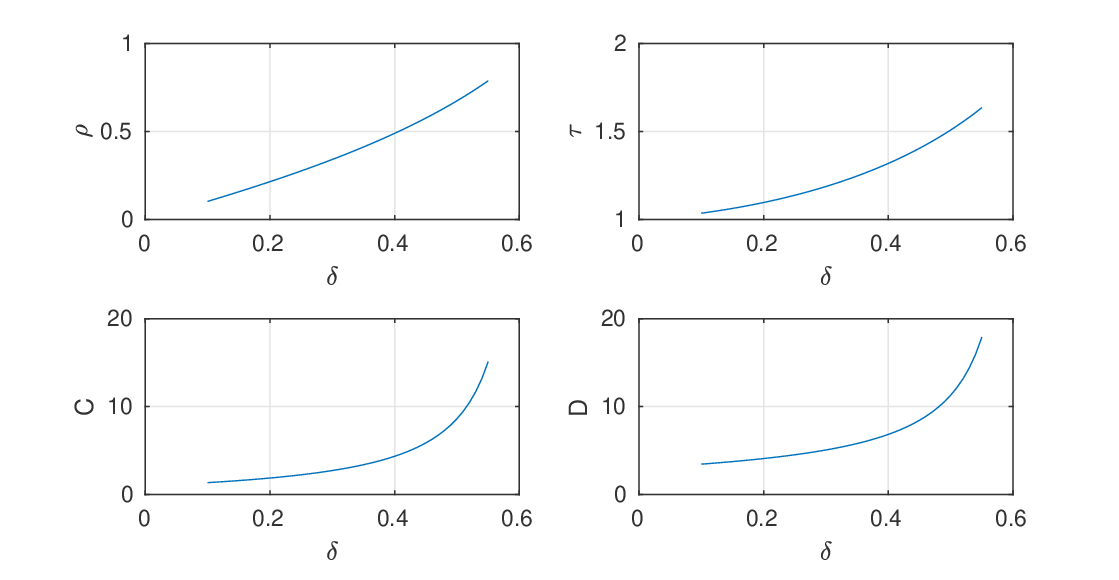}
  \caption{Dependency of the NSP parameter (bounds) $\rho=\rho(\delta)$,
    $\tau=\tau(\delta)$ in Theorem \ref{thm:RIPNSP}
    and the constants
    $C(\rho(\delta))$ and $D(\rho(\delta))$
    in Theorem \ref{thm:1}. For example, $\delta=0.5$ gives $\rho\approx 0.7$, $\tau=1.5$,
    $C(\rho)\approx 8.6$ and $D(\rho)\approx 11.3$.}
  \label{fig:nspparam}
\end{figure}  
Our first step will be to show that in the considered regime a
modified version $\Phi$ of $\A$ has with high probability $\ell^2$--RIP
with a sufficiently small RIP-constant. This then
implies that $\Phi$ and also $\A$ satisfy the $\ell^2$--NSP.

To this end, we define an operator
$P:\C^{n\times n} \rightarrow \R^{m}$, where $m\coloneqq 2n(n-1)$,
that maps a complex matrix to a real valued vector containing the real
and imaginary parts of all off-diagonal entries scaled by $\sqrt{2}$.
Hence, for any $M\in\C^{n\times n}$ we have $\|P(M)\|_2\leq\sqrt{2}\|M\|_F$.
Furthermore, we define the real vectors
\begin{equation}\begin{split}
    X_i:&=P(a_ia_i^*)\\
    &=\sqrt{2}[(\re(a_{i,k}\bar{a}_{i,l})_{k\neq l},\im(a_{i,k}\bar{a}_{i,l})_{k\neq l}]
  \end{split}
  \label{eq:def:P}
\end{equation}
These are independent and have subexponential
zero-mean entries. The factor $\sqrt{2}$ normalizes the resulting
vector so that
\begin{equation*}
  \mathbb{E}\|X_i\|_2^2
  =2\mathbb{E}\sum_{k\neq             l}\abs{a_{i,k}\bar{a}_{i,l}}^2=m.
\end{equation*}
To show $\ell^2$--RIP, we will use the following result on matrices with independent heavy-tailed columns from \cite{Adamczak2011}:
\begin{thm}[Theorem 3.3 in \cite{Adamczak2011} for the $\psi_1$-case]
  Let $X_1,\ldots,X_N \in \mathbb{R}^m$ be independent subexponential
  random vectors with $\mathbb{E} \|X_i\|_2^2 = m$ and let
  $\psi = \max_{i\in[N]}\|X_i\|_{\psi_1}$. Assume $s\leq \min \{N,m\}$ and let
  $\theta \in(0,1)$, $K,K^\prime \geq 1$ and set
  $\xi = \psi K + K^\prime$. Then for $\Phi := (X_1|...|X_N)$ it holds that
  \begin{equation}
    \delta_s\left(\frac{\Phi}{\sqrt{m}}\right) \leq C \xi^2\sqrt{\frac{s}{m}}
    \log\left(\frac{eN}{s\sqrt{\frac{s}{m}}}\right)
    + \theta
  \end{equation}
  with probability larger than
  \begin{equation}\begin{split}
      1 &- \exp(-\hat{c}K\sqrt{s}\log(\frac{eN}{s\sqrt{\frac{s}{m}}})) \\
      &- \prob{\max_{i\in [N]}\|X_i\|_2\geq K^\prime\sqrt{m}} \\
      &- \prob{\max_{i\in [N]}|\frac{\|X_i\|_2^2}{m} - 1|\geq \theta},
      \label{eq:adam_error}
    \end{split}
  \end{equation}
  where $C,\hat{c} > 0$ are universal constants.
  \label{thm:RIP:heavy}
\end{thm}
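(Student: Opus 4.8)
The plan is to realise the restricted isometry constant as the supremum of a centred empirical process over the set of sparse unit vectors and then to split off a benign \textbf{diagonal} contribution, which produces the additive $\theta$, from a genuinely random \textbf{off-diagonal} contribution, which produces the main term $C\xi^2\sqrt{s/m}\log(\cdot)$. Writing $U_s=\set{x\in\R^N : \norm{x}_0\leq s,\ \norm{x}_2=1}$, we have $\delta_s(\Phi/\sqrt m)=\sup_{x\in U_s}\bigabs{\tfrac1m\norm{\Phi x}_2^2-1}$, and since $\Phi x=\sum_i x_i X_i$,
\begin{equation*}
  \frac1m\norm{\Phi x}_2^2-1=\frac1m\sum_{i}x_i^2\br{\norm{X_i}_2^2-m}+\frac1m\sum_{i\neq j}x_i x_j\scp{X_i}{X_j}.
\end{equation*}
Using $\norm{x}_2=1$ and $\expec{\scp{X_i}{X_j}}=0$ for $i\neq j$ (independence and centring, as in the intended application), the right-hand side is centred.

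First I would dispose of the diagonal sum. On the event $\set{\max_{i\in[N]}\bigabs{\norm{X_i}_2^2/m-1}<\theta}$ its absolute value is at most $\theta\sum_i x_i^2=\theta$ uniformly in $x\in U_s$; the complement of this event is exactly the third probability subtracted in \eqref{eq:adam_error}. This is the sole source of the additive $\theta$ and requires no chaining.

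The core of the argument is the off-diagonal sum $R(x):=\tfrac1m\sum_{i\neq j}x_i x_j\scp{X_i}{X_j}$, for which I would bound $\sup_{x\in U_s}\abs{R(x)}$. I would first decouple the double sum into two independent copies of the column family and symmetrise with Rademacher signs, reducing to a bilinear/Rademacher-chaos process indexed by $U_s$. Because the columns are only subexponential, the interactions $\scp{X_i}{X_j}$ are heavy-tailed, so I would condition on the truncated event $\set{\max_{i\in[N]}\norm{X_i}_2\leq K'\sqrt m}$ --- whose failure is the second probability in \eqref{eq:adam_error} --- on which each column length is controlled at scale $K'\sqrt m$. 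On this event I would run a generic-chaining/Dudley estimate for $\sup_{x\in U_s}\abs{R(x)}$: the index set $U_s$ is a union of $\binom{N}{s}$ unit spheres in $s$-dimensional coordinate subspaces, whose metric entropy scales like $s\log(eN/s)$, and combining this with the $\psi_1$-norm $\psi$ of the columns and the truncation level $K'$ yields the composite parameter $\xi=\psi K+K'$, the factor $\sqrt{s/m}$, and the logarithm $\log\bigbr{eN/(s\sqrt{s/m})}$. The chaining tail is the first probability $\exp(-\hat c K\sqrt s\log(\cdot))$ in \eqref{eq:adam_error}, with $K$ the free dilation parameter of the chaining bound. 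Intersecting the three good events and adding the $\theta$ from the diagonal gives the claimed estimate on $\delta_s(\Phi/\sqrt m)$.

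The hard part is the uniform control of the off-diagonal process $\sup_{x\in U_s}\abs{R(x)}$ under merely subexponential columns: the standard subgaussian RIP proofs via matrix Bernstein or independent rows do not apply, since the heavy tails of $\scp{X_i}{X_j}$ preclude a direct sub-Gaussian chaining. The resolution --- and the delicate technical core --- is to balance the truncation level $K'$ against the $\psi_1$-chaining parameter $K$ so that the truncated process admits a Talagrand-type chaining bound while both the truncation and chaining tails stay small; it is precisely this balance that is encoded in $\xi=\psi K+K'$ and that forces the three-term probability in \eqref{eq:adam_error}.
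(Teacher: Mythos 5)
First, a point of reference: the paper does not prove this statement at all --- it is quoted verbatim from \cite{Adamczak2011} as an external tool, so the only fair comparison is with the original proof of Adamczak, Litvak, Pajor and Tomczak-Jaegermann. Measured against that, your high-level architecture is the right one and matches theirs: writing $\delta_s(\Phi/\sqrt{m})$ as a supremum over the union $U_s$ of $s$-dimensional coordinate spheres, splitting into a diagonal part, which is bounded by $\theta$ on the complement of the third event in \eqref{eq:adam_error}, and an off-diagonal bilinear part, which is controlled after truncating on the second event $\{\max_i \norm{X_i}_2\leq K'\sqrt{m}\}$, with the first event carrying the chaining tail. You have correctly identified where each of the three subtracted probabilities comes from.

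The genuine gap is that the entire difficulty of the theorem is compressed into the sentence ``I would run a generic-chaining/Dudley estimate.'' A plain entropy/Dudley bound over $U_s$ with $\psi_1$ increments does not deliver either the main term $C\xi^2\sqrt{s/m}\,\log\bigl(eN/(s\sqrt{s/m})\bigr)$ or, more tellingly, the tail $\exp\bigl(-\hat{c}K\sqrt{s}\log(\cdot)\bigr)$: the $\sqrt{s}$ (rather than $s$) in the exponent and the free dilation parameter $K$ are signatures of the specific argument in \cite{Adamczak2011}, where the off-diagonal process is not handled by off-the-shelf chaining but by a separate theorem (their Theorem 3.2) bounding $\max_{|E|\leq s}\bignorm{(\scp{X_i}{X_j})_{i\neq j\in E}}_o$, proved via an iterated truncation and selector-type decomposition in the spirit of Talagrand's work; this is exactly where the interplay producing $\xi=\psi K+K'$ lives. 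Saying that one must ``balance $K'$ against $K$'' names this difficulty without resolving it. Two smaller issues: your centring step assumes $\expec{\scp{X_i}{X_j}}=0$, which is not among the hypotheses (the columns need not be centred --- the statement survives because a nonzero mean inflates $\psi$, and the original proof bounds the off-diagonal form directly without centring); and symmetrization, as you invoke it, yields bounds in expectation, whereas the claim is a high-probability bound with a prescribed tail, so that reduction would itself need justification. In short: correct skeleton, correct accounting of the three events, but the core lemma is asserted rather than proved, so as it stands the proposal is an accurate roadmap to the literature rather than a proof.
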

The last term in \eqref{eq:adam_error} shows the intuitive behavior
that the concentration of the column norms $\|X_i\|^2_2/m$ have direct
impact on the RIP (take for example $s=1$). In our case we will apply
Theorem \ref{thm:RIP:heavy} above to the vectors defined in \eqref{eq:def:P}. The norm $\|X_i\|^2_2$ is in
general a $4$th order polynomial in the $m$ real subgaussian random variables $\re (a_{i,k}), \im (a_{i,k})$.  
In Appendix \ref{app:psir:moments} we show how to calculate tail bounds for a polynomial of this form, the summary for our specific case is the following corrollary: 

\begin{cor}\label{cor:tailPaa}
  Consider the model \ref{model:aa} and $X_i\coloneqq P(a_i a_i^*)$ as
  defined in \eqref{eq:def:P} so that $\mathbb{E}(\|X_i\|_2^2)=m$. Assume $n\geq \psi_2^4$.
  For $\omega\in [0,1]$ it holds that
    \begin{align*}
        \prob{ \abs{\|X_i\|_2^2 - m} \geq m\omega} \leq 2 \exp\bigbr{ -\gamma \frac{\omega^2}{{\psi_2^4}} \cdot n  },
    \end{align*}
    where $\gamma\in (0,1)$ is some absolute constant.
    
    
   \end{cor}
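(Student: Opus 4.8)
The plan is to treat $\norm{X_i}_2^2$ through the algebraic identity that exhibits it as a low-degree polynomial in the entries of $a_i$, and then to separate a dominant part that concentrates via Lemma~\ref{lem:M+AM} from a genuinely subordinate quartic correction. Writing $b_k\coloneqq\smallabs{a_{i,k}}^2$ and using \eqref{eq:def:P}, we have
\[
  \norm{X_i}_2^2=2\sum_{k\neq l}b_k b_l=2\bigbr{R^2-Q},\qquad R\coloneqq\norm{a_i}_2^2=\sum_k b_k,\quad Q\coloneqq\sum_k b_k^2,
\]
so that $m=\expec{\norm{X_i}_2^2}=2\,\expec{R^2-Q}=2n(n-1)$. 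The point is that $R$ is exactly the quantity already controlled in Lemma~\ref{lem:M+AM}, while $Q=\sum_k\smallabs{a_{i,k}}^4$ has mean of order $n\psi_2^4$ only, i.e.\ a factor $n$ smaller than the leading scale $n^2$ of $R^2$.

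First I would pin down the dominant term $R^2$. Applying Lemma~\ref{lem:M+AM} to the single index $i$ (before any union bound) gives $\prob{\smallabs{R-n}\geq\eta n}\leq2\exp(-\tfrac{c\eta^2}{2\psi_2^4}n)$ for $\eta\in(0,1)$. On the complementary event $\smallabs{R^2-n^2}=\smallabs{R-n}\,(R+n)\leq\eta(2+\eta)n^2\leq3\eta n^2$, and since $\smallabs{\expec{R^2}-n^2}=\var{R}\lesssim n\psi_2^4$, we obtain $\smallabs{R^2-\expec{R^2}}\leq3\eta n^2+O(n\psi_2^4)$. Choosing $\eta\asymp\omega$ makes the first summand $O(n^2\omega)$, while the second is absorbed into the same scale precisely because $n\geq\psi_2^4$ forces $n\psi_2^4\leq n^{3/2}\psi_2^2\lesssim n^2\omega$ in the only range of $\omega$ that matters (see below). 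Thus the $R^2$-contribution to $\smallabs{\norm{X_i}_2^2-m}$ is $\leq\tfrac{m\omega}{2}$ on an event of probability at least $1-2\exp(-\gamma\omega^2 n/\psi_2^4)$, which is already the claimed bound.

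It remains to show the quartic term $Q$ is subordinate, and this is the step I expect to be the main obstacle. Two observations make it manageable. First, the asserted inequality is vacuous (right-hand side $\geq1$) unless $\omega\gtrsim\psi_2^2/\sqrt{n}$; combined with $n\geq\psi_2^4$ this guarantees $n^2\omega\gtrsim n^{3/2}\psi_2^2\geq n\psi_2^4$, so the target level $n(n-1)\omega/2$ sits far above $\expec{Q}\lesssim n\psi_2^4$ and above the mean shift encountered for $R^2$. Second, $Q$ is a sum of $n$ independent nonnegative variables $\smallabs{a_{i,k}}^4$ with $\normpsi{\smallabs{a_{i,k}}^4}{1/2}\lesssim\psi_2^4$, so a Bernstein-type deviation bound for such $\psi_{1/2}$ (sub-Weibull) summands yields exponential decay of $\prob{\smallabs{Q-\expec{Q}}\geq n(n-1)\omega/2}$ that is dominated by the $R^2$-bound in this regime. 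The genuine work is therefore the careful constant- and regime-tracking needed to verify subordination under $n\geq\psi_2^4$ and $\omega\leq1$, together with a clean tail estimate for the heavy-tailed sum $Q$; assembling the two pieces by a union bound and adjusting $\gamma$ then gives the corollary. Equivalently, one could bypass the identity and bound $\norm{X_i}_2^2$ directly as a degree-$4$ polynomial chaos in the real subgaussian variables $\re(a_{i,k}),\im(a_{i,k})$, whose linear part reproduces the $R^2$-fluctuation and whose higher chaos reproduces $Q$.
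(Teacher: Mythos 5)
Your proposal is correct, but it follows a genuinely different route from the paper. The paper proves the corollary in one line from Proposition~\ref{prop:conc:4ord}: it treats $\norm{X_i}_2^2$ wholesale as a degree-$4$ polynomial in the $2n$ real subgaussian variables and invokes the general polynomial concentration inequality of Theorem~\ref{thm:concentrationPoly} (G\"otze--Sambale--Sinulis), with all the partition norms $\smallnorm{\expec{\mathbf{D}^d f(Z)}}_{\mathcal{J}}$ estimated in Appendix~B; specializing to $\mu=0$, $\sigma^2=1$, $L=\psi_2$, the ten-term minimum \eqref{eq:minZeta} collapses under $n\geq\psi_2^4$ and $\omega\leq 1$ to exactly $\omega^2/\psi_2^4$ --- the same two-term competition you isolate by hand (your closing remark about ``degree-$4$ chaos'' is precisely the paper's route). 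Your decomposition $\norm{X_i}_2^2=2(R^2-Q)$ instead reduces everything to the scalar Hanson--Wright bound for $R=\norm{a_i}_2^2$, already contained in the proof of Lemma~\ref{lem:M+AM}, plus a crude tail bound for the light term $Q$, and your bookkeeping checks out: $\expec{Q}-n=\expec{R^2}-n^2=\var{R}\lesssim n\psi_2^4$, the vacuity threshold $\omega\gtrsim\psi_2^2/\sqrt{n}$ together with $n\geq\psi_2^4$ gives $n\psi_2^4\leq n^{3/2}\psi_2^2\lesssim n^2\omega$, the choice $\eta\asymp\omega$ keeps the exponent at $\omega^2 n/\psi_2^4$, and the factor $4$ from the union bound is absorbed into the constant $2$ by halving $\gamma$ on the nonvacuous event. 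What your route buys is elementarity and transparency: it avoids the derivative/partition computations of Appendix~B entirely and makes visible why only the exponents $\omega^2 n/\psi_2^4$ (from $R$) and a dominated sub-Weibull term (from $Q$) survive in this regime. What it costs: one ingredient outside the paper's toolkit, namely a Bernstein-type inequality for sums of independent $\psi_{1/2}$ variables with $\normpsi{\smallabs{a_{i,k}}^4}{1/2}\lesssim\psi_2^4$; this is standard but would need a citation or proof, though the margin is so large here (deviation threshold $\asymp n^2\omega\gtrsim n^{3/2}\psi_2^2$ against fluctuations of scale $\sqrt{n}\,\psi_2^4$) that even a fixed high-order moment bound via Markov would suffice. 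Conversely, the paper's heavier machinery buys generality: Proposition~\ref{prop:conc:4ord} covers all $\omega>0$ and nonzero means $\mu$, which your vacuity trick --- tailored to $\omega\leq1$ and the centered model --- does not deliver.
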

  \begin{proof}
    This follows from Proposition \ref{prop:conc:4ord} in Appendix
    \ref{app:psir:moments}. In our case we have
    $\mu=0$ and $\sigma^2={1}$ and $L=\psi_2\geq 1$, hence the
    minimum in \eqref{eq:minZeta} can be computed as
    $\min\{\frac{\omega^2}{\psi_2^8}\cdot n,\frac{\omega^2}{\psi_2^4}\}=\frac{\omega^2}{\psi_2^4}$, using $n\geq \psi_2^4$.
  \end{proof}

  Now we are ready to show $\ell^2$--RIP for the matrix
  $\Phi \coloneqq \frac{1}{\sqrt{m}}P\circ\A$. A similar result for
  the real case where (informally) ``$P$ is replaced by centering''
  has been established in \cite{Fengler:krrip:2019}.  However, to
  establish the NSP it is more direct to remove the diagonal part with
  the definition of $P$ in \eqref{eq:def:P}.
\begin{thm}\label{thm:RIP}
  Assume that $\A:\R^N\rightarrow\C^{n\times n}$ is given by Model
  \ref{model:aa}
  and let $\delta\in(0,1]$. Assume
  $N\geq m=2n(n-1)$. If
  \begin{equation}
    2s\leq \alpha m\log^{-2}(\frac{eN}{\alpha m})
    \label{eq:thm:RIP}
  \end{equation}
  and $n\geq \frac{2 \log(4N)}{C_1}$,  then,
  with probability
  \begin{equation}
    \geq 1-2\exp(- \min\smallset{\hat{c}  \sqrt{\alpha},\, \frac{1}{2}C_1} \cdot n ),\label{eq:RIPprob}
  \end{equation}
  the matrix $\Phi=\frac{1}{\sqrt{m}}P\circ \A\in\R^{m\times N}$
  has $\ell^2$--RIP of order $2s$ with RIP-constant $\delta_{2s}(\Phi) \leq \delta$. 
  The constants $C,\hat{c}$ are the same as in Theorem \ref{thm:RIP:heavy} and  $C_1, \alpha$ are given as  $C_1= \frac{\gamma  \delta^2}{4\psi_2^4}$, with $\gamma$ as in Corollary \ref{cor:tailPaa}, and $\alpha\coloneqq\min\{1,
  \big(\frac{\delta}{6C(\psi_1+\sqrt{1+\delta/2})^2}\big)^2\}$, where $\psi_1\coloneqq \max_{i\in [N]}\norm{P(a_i a_i^*)}_{\psi_1}$.
\end{thm}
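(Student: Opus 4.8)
The plan is to apply Theorem~\ref{thm:RIP:heavy} to the independent subexponential columns $X_i=P(a_ia_i^*)$ of \eqref{eq:def:P} (which satisfy $\expec{\|X_i\|_2^2}=m$), with order $2s$ in place of $s$ and the explicit choices $K=1$, $K'=\sqrt{1+\delta/2}$, $\theta=\delta/2$. These give $\xi=\psi_1K+K'=\psi_1+\sqrt{1+\delta/2}$; since $6C\xi^2>\delta$ the minimum in $\alpha=\min\{1,(\delta/(6C\xi^2))^2\}$ is the second term, so $\alpha<1$ and $C\xi^2\sqrt\alpha=\delta/6$. Because restricted isometry constants are monotone in the sparsity level, $\delta_{2s}(\Phi)\leq\delta_{2s_0}(\Phi)$ for $s\leq s_0$, it suffices to prove the bound at the largest admissible level, i.e.\ for the largest integer $s_0$ with $2s_0\leq\alpha m\log^{-2}(eN/(\alpha m))$; the assertion for the given $s$ then follows. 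The remaining hypotheses of Theorem~\ref{thm:RIP:heavy} hold since $2s_0\leq m\leq N$, $K,K'\geq1$ and $\theta\in(0,1)$.

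First I would prove the deterministic conclusion: that the leading term $C\xi^2\sqrt{2s_0/m}\,\log(eN/(2s_0\sqrt{2s_0/m}))$ is at most $\delta/2$, which with $\theta=\delta/2$ gives $\delta_{2s_0}(\Phi)\leq\delta$. Writing $u=2s_0/m$ and $L=\log(eN/(\alpha m))$, this term is $C\xi^2g(u)$ with $g(u)=\sqrt u\,\log(eN/(mu^{3/2}))$, and a one-line derivative computation shows $g'(u)>0$ exactly when $\log(eN/(mu^{3/2}))>3$. Since $\alpha<1$ forces $L>3$, this holds throughout $0<u\leq\alpha/L^2$, so it suffices to bound $g$ at the right endpoint $u=\alpha/L^2$, where $g=\frac{\sqrt\alpha}{L}\big(L+3\log L-\tfrac12\log\alpha\big)$. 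I expect this logarithmic reconciliation to be the main obstacle, because Adamczak's bound carries $\log(eN/(2s_0\sqrt{2s_0/m}))$ whereas the hypothesis \eqref{eq:thm:RIP} is phrased through $L=\log(eN/(\alpha m))$, and one must absorb the overshoot $3\log L-\tfrac12\log\alpha$. Using $N\geq m$ and $\alpha<1$ one has $L\geq 1+|\log\alpha|$, hence $-\tfrac{\log\alpha}{2L}\leq\tfrac12$ and $\tfrac{3\log L}{L}\leq 3/e$, so $g\leq 3\sqrt\alpha$; multiplying by $C\xi^2$ and invoking $C\xi^2\sqrt\alpha=\delta/6$ bounds the leading term by $\delta/2$.

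It remains to control the three failure probabilities in \eqref{eq:adam_error} at the level $2s_0$. For the third term, Corollary~\ref{cor:tailPaa} with $\omega=\theta=\delta/2$ and a union bound over $i\in[N]$ give $\prob{\max_{i\in[N]}\smallabs{\|X_i\|_2^2/m-1}\geq\delta/2}\leq 2N\exp(-C_1 n)$ with $C_1=\gamma\delta^2/(4\psi_2^4)$, the corollary's hypothesis $n\geq\psi_2^4$ being implied by $n\geq 2\log(4N)/C_1$. The second term is dominated by the third, since $K'=\sqrt{1+\delta/2}$ makes $\{\|X_i\|_2\geq K'\sqrt m\}\subseteq\{\|X_i\|_2^2/m-1\geq\delta/2\}$; hence the second and third terms together are at most $4N\exp(-C_1 n)$, which is $\leq\exp(-C_1 n/2)$ precisely because $n\geq 2\log(4N)/C_1$. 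For the first term, evaluating the logarithm at $u=\alpha/L^2$ gives $\log(eN/(2s_0\sqrt{2s_0/m}))\geq L$, while maximality of $s_0$ gives $2s_0\gtrsim\alpha m/L^2$, so $\sqrt{2s_0}\,\log(eN/(2s_0\sqrt{2s_0/m}))\gtrsim\sqrt{\alpha m}\geq\sqrt\alpha\,n$ (using $\sqrt m=\sqrt{2n(n-1)}\geq n$ for $n\geq2$), bounding the first term by $\exp(-\hat c\sqrt\alpha\,n)$. Summing the two surviving contributions gives the total failure probability $\leq 2\exp(-\min\{\hat c\sqrt\alpha,\,C_1/2\}\,n)$, i.e.\ \eqref{eq:RIPprob}, and $\delta_{2s}(\Phi)\leq\delta$ follows.
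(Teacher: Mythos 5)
Your proposal is correct and follows essentially the same route as the paper's proof: apply Theorem~\ref{thm:RIP:heavy} to the columns $X_i=P(a_ia_i^*)$ with $K=1$, $K'=\sqrt{1+\theta}$, $\theta=\delta/2$, control the two column-norm terms via Corollary~\ref{cor:tailPaa} and a union bound absorbed using $n\geq 2\log(4N)/C_1$, reduce the leading RIP term to $3C\xi^2\sqrt{\alpha}+\delta/2\leq\delta$ through the choice of $\alpha$, bound the first failure term by $\exp(-\hat{c}\sqrt{\alpha}\,n)$ using $\sqrt{m}\geq n$, and conclude by monotonicity of the restricted isometry constant. The differences are cosmetic — you work at the largest integer level $2s_0$ and absorb the overshoot $3\log L-\tfrac{1}{2}\log\alpha$ via $L\geq 1+\abs{\log\alpha}$ and $\log L/L\leq 1/e$, where the paper plugs $s^*=\alpha m\log^{-2}(eN/(\alpha m))$ in directly and instead bounds $\alpha^{-1/2}\leq(eN/(\alpha m))^{1/2}$ — with one small imprecision worth fixing: $\alpha<1$ alone does not force $L>3$; what you actually need (and have) is $\alpha\leq(\delta/(6C\xi^2))^2\leq 1/36$, since $\xi\geq K'\geq 1$.
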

\begin{proof}
  We will apply Theorem  \ref{thm:RIP:heavy} and use ideas already
  presented in \cite[Theorem 5 and Corollary 1]{Fengler:krrip:2019}.
  
  Define the $N$ real-valued random vectors
  $X_i=P(a_ia_i^*), i\in [N]$. The number
  $\psi_1=\max_{i\in[N]}\normpsi{X_i}{1}$ defined above is finite,
  independent of the dimension and depends quadratically on $\psi_2$, see Appendix
  \ref{app:psir:moments}.  Let $\alpha\in (0,1]$, the value will be
  specified later, and set
  $s^* \coloneqq \alpha m/\log^2(\frac{eN}{\alpha m})$. Since
  $\log(\frac{eN}{\alpha m})\geq 1$, we ensure $s^*\leq m\leq N$. By
  Theorem \ref{thm:RIP:heavy}, the RIP-constant
  $\delta_{s^*}\coloneqq \delta_{s^*}(\Phi)$ of the matrix
  $\Phi\coloneqq\frac{1}{\sqrt{m}}P\circ \A$ satisfies
  \begin{equation}
    \delta_{s^*}	\leq C \xi^2 \sqrt{\frac{s^*}{m}} \log(\frac{eN}{s^* \sqrt{(s^*/m)}}) + \theta	 \label{deltaStar}
  \end{equation}
  with probability larger than
  \begin{align}
    1	&-\exp\bigbr{-\hat{c} K \sqrt{s^*} \log (\frac{eN}{s^* \sqrt{s^*/m}})}	\label{1stprob}\\
        &-	\prob{\max_{i\ind} \smallnorm{X_i}_2 \geq K' \sqrt{m}} \label{2ndprob}\\
 	&-\prob{\max_{i\ind} \smallabs{ \frac{\norm{X_i}_2^2}{m} - 1} \geq \theta}.\label{3rdprob}
  \end{align}
By definition of $s^*$, we can estimate \eqref{deltaStar} as
  \begin{align}
     \delta_{s^*}	&\leq	 \frac{C \xi^2\sqrt{\alpha}}{\log(\frac{eN}{\alpha m})} \log\Big((\frac{eN}{\alpha m})^{3/2} \log^3(\frac{eN}{\alpha m})\Big) + \theta\nonumber	\\	
                &= C\xi^2 \sqrt{c} \Big( \frac{3}{2} + 3\frac{\log \log(\frac{eN}{\alpha m})}{\log(\frac{eN}{\alpha m})}\Big) + \theta\nonumber\\
                &\leq	C\xi^2 \sqrt{\alpha } \big( \frac{3}{2} + \frac{3}{e}\big)+ \theta\nonumber\\
                &\leq 3C\xi^2\sqrt{\alpha }+ \theta,\label{eq:deltaEstimate}
  \end{align}
  where we used $\frac{\log \log x}{\log x}\leq \frac{1}{e}$ for $x>1$ in the last line.
  For \eqref{2ndprob}, \eqref{3rdprob}, taking union
  bounds and rewriting gives
  \begin{align}
    &\prob{\max_{i\ind} \smallabs{ \frac{\norm{X_i}_2^2}{m} - 1} \geq \theta}\nonumber	\\
    &\leq	N\cdot\prob{\smallabs{ \norm{X_i}_2^2 -  m} \geq \theta m}\label{eq:tailTheta}
  \end{align}
  and
  \begin{align*}
    &  \prob{\max_{i\ind} \smallnorm{X_i}_2 \geq K' \sqrt{m}}\\
    &\leq	N\cdot\prob{\smallnorm{X_i}^2_2 \geq K'^2 m}\\
    &\leq N\cdot\prob{\smallabs{ \norm{X_i}_2^2 -  m} \geq (K'^2-1) m}.
  \end{align*}  
  Choosing $K'\coloneqq \sqrt{1 + \theta}$, both terms above are
  equal. We set $\theta= \frac{\delta}{2}$.
  Note that $n\geq\frac{2\log(4N)}{C_1}$ yields $n\geq \psi_2^4$ since $C_1= \frac{\gamma  \delta^2}{4\psi_2^4}$ and $\gamma,\delta\leq 1$. Hence, using Corollary \ref{cor:tailPaa} with $\omega=\frac{\delta}{2}$, the probabilities above can be bounded by
  $2N \exp(-C_1\cdot n)$. Since
    $n\geq\frac{2 \log (4N)}{C_1}$, we can estimate
  \begin{align}
    4Ne^{-C_1\cdot n} =e^{\log(4 N) - C_1\cdot n}\leq e^{- 
    \frac{1}{2}C_1\cdot n}.
    \label{eq:unionbound:1}
  \end{align}
  Now set $K=1$ and choose $\alpha $ sufficiently
  small so that we get $\delta_{s^*}\leq \delta$ from
  \eqref{eq:deltaEstimate}, i.e., $\alpha  \leq	\big(\frac{\delta}{6C(\psi+\sqrt{1+\delta/2})^2}\big)^2$. The term \eqref{1stprob} can be estimated
  in the following way using $s^*=\alpha m/\log^2(\frac{eN}{\alpha m})\leq \alpha ^{2/3}m$
  :
  \begin{align}
    &\exp\bigbr{-\hat{\alpha } \sqrt{s^*} \log (\frac{eN}{s^* \sqrt{s^*/m}})}\nonumber\\
    \leq
    &	\exp\bigbr{	-\hat{c }	 \sqrt{s^*} \log ( \frac{eN}{\alpha m})	}	\nonumber\\	
    =&\exp\bigbr{	-\hat{c}	  \sqrt{\alpha }\cdot \sqrt{m} }
      \nonumber\\
    \leq & \exp\bigbr{-\hat{c}\sqrt{\alpha }\cdot n} \label{eq:unionbound:2}
  \end{align}
 Using \eqref{eq:unionbound:1}, \eqref{eq:unionbound:2} we get
 \begin{align*}
   \prob{\delta_{s^*} \leq \delta}&\geq
   1 - \exp\bigbr{- \hat{c}  \sqrt{\alpha } \cdot n} -
   \exp\bigbr{-\frac{1}{2} C_1 \cdot n}\\
   &\geq    1 - 2\exp\bigbr{-\min\smallset{\hat{c}\sqrt{\alpha },\frac{1}{2}C_1}\cdot n}.
 \end{align*}
 By monotonicity of the RIP-constant  we get the same lower bound for $\prob{\delta_{2s} \leq \delta}$, whenever $2s\leq s^*$.

\end{proof}
From this it easily follows that $\Phi$ and also $\A$ itself satisfy the $\ell^2$--NSP. 
\begin{thm}\label{thm:NSP}
  Assume that $\A:\R^N\rightarrow\C^{n\times n}$ is given by Model
  \ref{model:aa}.
  Let $N\geq m=2n(n-1)$, $\delta\in (0, \frac{4}{\sqrt{41}})$ and
  assume
  \begin{equation*}
    s\lesssim m\log^{-2}(N/m)
  \end{equation*}
  and $n\gtrsim \log(N)$ as in \eqref{eq:thm:RIP}.
  Then, with probability
  \begin{equation}\label{eq:NSPprob}
    \geq {1-2\exp(-c_\delta\cdot n )},
  \end{equation}
  $\A$ has the $\ell^2$--NSP of order $s$  w.r.t. the
  Frobenius norm $\|\cdot\|_\fro$ with parameters $\rho$ and $\tau\sqrt{2}/\sqrt{m}$. The number ${c_\delta}$ is defined so that \eqref{eq:NSPprob} coincides with \eqref{eq:RIPprob} and $\rho, \tau$ satisfy \eqref{eq:rho_tau_RIP} with the chosen $\delta$.
\end{thm}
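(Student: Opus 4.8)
The plan is to chain together the three ingredients already assembled in this section: the restricted isometry property of the rescaled off-diagonal map $\Phi=\frac{1}{\sqrt m}P\circ\A$ from Theorem \ref{thm:RIP}, the RIP-to-NSP implication of Theorem \ref{thm:RIPNSP}, and finally a transfer of the nullspace property from $\Phi$ back to $\A$ via the elementary bound $\norm{P(M)}_2\leq\sqrt2\,\norm{M}_\fro$ recorded when $P$ was introduced in \eqref{eq:def:P}.

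First I would invoke Theorem \ref{thm:RIP} with the given $\delta\in(0,\frac{4}{\sqrt{41}})$. The sparsity and dimension hypotheses $s\lesssim m\log^{-2}(N/m)$ and $n\gtrsim\log(N)$ are exactly those of \eqref{eq:thm:RIP} (applied with $2s\leq s^*$), so with probability at least \eqref{eq:RIPprob} the matrix $\Phi$ has $\ell^2$--RIP of order $2s$ with $\delta_{2s}(\Phi)\leq\delta$. Since $\delta<\frac{4}{\sqrt{41}}$, Theorem \ref{thm:RIPNSP} then applies and yields that $\Phi$ satisfies the $\ell^2$--NSP of order $s$ with respect to the Euclidean norm, with parameters $\rho$ and $\tau$ given by \eqref{eq:rho_tau_RIP}. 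I would set $c_\delta$ equal to the exponent $\min\{\hat c\sqrt{\alpha},\tfrac12 C_1\}$ appearing in \eqref{eq:RIPprob}, so that the failure probability in \eqref{eq:NSPprob} matches \eqref{eq:RIPprob} verbatim.

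The only remaining step is to rewrite the measurement term so that it refers to $\A$ and the Frobenius norm rather than to $\Phi$ and the Euclidean norm. Concretely, the $\ell^2$--NSP of $\Phi$ reads, for every $v\in\R^N$ and every $S$ with $|S|\leq s$,
\begin{equation*}
  \norm{v_S}_2\leq\frac{\rho}{\sqrt s}\norm{v_{S^c}}_1+\tau\norm{\Phi v}_2,
\end{equation*}
and by definition $\norm{\Phi v}_2=\frac{1}{\sqrt m}\norm{P(\A(v))}_2$. Applying $\norm{P(M)}_2\leq\sqrt2\,\norm{M}_\fro$ with $M=\A(v)$ gives $\norm{\Phi v}_2\leq\frac{\sqrt2}{\sqrt m}\norm{\A(v)}_\fro$. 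Substituting this bound into the displayed inequality leaves $\rho$ untouched and replaces $\tau$ by $\tau\sqrt2/\sqrt m$, which is precisely the claimed $\ell^2$--NSP of order $s$ for $\A$ with respect to $\norm{\cdot}_\fro$.

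There is no genuine obstacle here; the argument is a clean composition, and the factor $\sqrt2/\sqrt m$ in the noise parameter is exactly the price of passing from the normalized off-diagonal vectorization $\Phi$ to the original matrix map $\A$. The two points that deserve attention are purely bookkeeping: first, that the RIP order is matched correctly, since Theorem \ref{thm:RIP} controls $\delta_{2s}$ while Theorem \ref{thm:RIPNSP} consumes exactly $\delta_{2s}$ to produce an order-$s$ nullspace property; and second, that the constraints on $s$, $n$ and $\delta$ are inherited without change from Theorem \ref{thm:RIP}, so that the stated probability bound holds under the hypotheses of the present theorem.
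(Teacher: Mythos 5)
Your proposal is correct and follows exactly the paper's own argument: invoke Theorem \ref{thm:RIP} to get $\delta_{2s}(\Phi)\leq\delta$ with the stated probability, apply Theorem \ref{thm:RIPNSP} to obtain the $\ell^2$--NSP for $\Phi$ with parameters from \eqref{eq:rho_tau_RIP}, and transfer to $\A$ via $\norm{\Phi v}_2=\frac{1}{\sqrt m}\norm{P(\A(v))}_2\leq\frac{\sqrt 2}{\sqrt m}\norm{\A(v)}_\fro$, which rescales $\tau$ to $\tau\sqrt2/\sqrt m$. Your bookkeeping remarks on the order matching ($2s$ versus $s$) and on $c_\delta=\min\{\hat c\sqrt\alpha,\tfrac12 C_1\}$ are likewise consistent with the paper.
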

\begin{proof}
  We set
  $\Phi=\frac{1}{\sqrt{m}}P\circ \A\in\R^{m\times N} $. By Theorem \ref{thm:RIP}, with probability \eqref{eq:NSPprob} $\Phi$ has $\ell^2$--RIP of order $2s$ with RIP-constant $\delta_{2s}(\Phi)\leq\delta$. Theorem \ref{thm:RIPNSP}
  implies that $\Phi$ in this case satisfies the
  $\ell^2$--NSP with parameters $(\rho,\tau)$ depending on $\delta$ as
  given in \eqref{eq:rho_tau_RIP}. Hence,
  for all $v\in\R^{N}$ and $S\subset [N]$ with
  $|S| \leq s$ it holds that
  \begin{align*}
    \norm{v_S}_2
    &\leq	\frac{\rho}{\sqrt{s}} \norm{v_{S^c}}_1 + \tau
      \norm{\Phi v}_2 \\
    &\leq	\frac{\rho}{\sqrt{s}} \norm{v_{S^c}}_1 + \frac{\tau}{\sqrt{m}} \norm{P(\A(v))}_2 \\
    &\leq
      \frac{\rho}{\sqrt{s}} \norm{v_{S^c}}_1+ \frac{\tau\sqrt{2}}{\sqrt{m}} \norm{\A(v)}_\fro,
  \end{align*}
  showing that the linear map $\A$ has the $\ell^2$--NSP of order $s$
  with respect to $\|\cdot\|_\fro$ and with parameters
  $(\rho,\tau\sqrt{2}/\sqrt{m})$.
\end{proof}

\subsection{Proof of the Main Recovery Guarantee for Model \ref{model:aa}}
Now we are ready to proceed with the proof of the second main result,
Theorem \ref{thm:main:subgaussian}. 
\begin{proof}[Proof of Main Theorem  \ref{thm:main:subgaussian}]
  We start from our first main result, Theorem \ref{thm:main:nonneg},
  for the case of the Frobenius norm $\|\cdot\|_\fro$. The convex
  program \eqref{eq:generic:nnnorm} is then {\em Nonnegative
    Least-Squares} (NNLS) and Theorem \ref{thm:main:nonneg} states
  that if the linear map $\A$ has the $\ell^2$--NSP with respect to
  $\|\cdot\|_\fro$ and fulfills the ${\cal M}^+$-criterion for some
  matrix $T$ with a sufficiently well-conditioned
  $\kappa=\kappa(\A^*(T))$, then NNLS obeys a recovery guarantee of
  the form \eqref{eq:main:unscaled}. It will be more convenient to
  choose here a different scaling for $T$ as we did in the end of the
  proof of Theorem \ref{thm:main:nonneg}.
  
  Theorem \ref{thm:NSP} states that with high probability $\A$ has the
  $\ell^2$--NSP with parameters $(\rho, \sqrt{2} \tau/\sqrt{m})$,
  where $\rho,\tau$ depend on the number $\delta$ from Theorem
  \ref{thm:RIPNSP} and \ref{thm:RIP}. We know that the
  $\mathcal{M}^+$--criterion for $\A$ is fulfilled for
  $T=t\cdot \Id_n$ with $t>0$. Lemma \ref{lem:M+AM} furthermore states
  that with overwhelming probability the resulting vector
  $w=t\A^*(\Id_n)$ is well-conditioned and concentrates around its
  mean. Set $\kappa \coloneqq \kappa(w)$ and $W\coloneqq \diag (w)$.
  Conditioned on events
  when $\A$ indeed has the $\ell^2$--NSP and $\kappa\rho<1$, we have from
  \eqref{eq:main:unscaled} that for any $1\leq p\leq q=2$ it holds that
  \begin{equation}\begin{split}\label{eq:main:subgaussian:1}
      \|&x^\sharp-x\|_p\\
      &\leq	\frac{2C(\kappa\rho)\kappa}{s^{1-\frac{1}{p}}}\sigma_s(x)_1\\
      &+  \frac{2D(\kappa\rho)}{s^{\frac{1}{2}-\frac{1}{p}}}\Bigbr{\kappa\frac{\sqrt{2}\tau}{\sqrt{m}} +
          \frac{\norm{W^{-1}}_o\norm{T}\dualnormsymbol}{\sqrt{s}}}\norm{E}_\fro.
      \end{split}
    \end{equation}
 The equation \eqref{eq:M+bound2} in this setting translates to $\kappa(w)\leq
   \frac{1+\eta}{1-\eta}=:\kappa_\eta$, where $\eta\in (0,1)$ will be specified later. Recall that the condition number is invariant to scaling of $w$, hence $\kappa=\kappa(w)$. The dual norm in
    \eqref{eq:thm:main:nonneg} is
    $\|T\|^\circ=\|T\|_\fro=t\|\Id_n\|_\fro=t\sqrt{n}$ and $\norm{W^{-1}}_o = (t\min_{i\in[N]} \norm{a_i}_2^2)^{-1}\leq \bigbr{tn(1-\eta)}^{-1}$. Choosing $t\coloneqq \bigbr{n(1+\eta)}^{-1}$ we achieve $\norm{W^{-1}}_o \leq \kappa_\eta$ and $\norm{T}\dualnormsymbol =\bigbr{\sqrt{n}(1+\eta)}^{-1} $. With these bounds and setting
  $C_{\eta,\rho}=2C(\kappa_\eta\rho)\kappa_\eta$,
  $D_{\eta,\rho}=2D(\kappa_\eta\rho)\kappa_\eta$, we can further estimate \eqref{eq:main:subgaussian:1}  as
  \begin{equation}\begin{split}\label{eq:main:subgaussian:2}
   &\leq	\frac{C_{\eta,\rho}\sigma_s(x)_1}{s^{1-\frac{1}{p}}}
      +  \frac{D_{\eta,\rho}}{s^{\frac{1}{2}-\frac{1}{p}}}\Bigbr{\frac{n \sqrt{2}\tau}{\sqrt{m}} +  \sqrt{\frac{n}{s}}(1+\eta)^{-1}}\frac{\norm{E}_\fro}{n}\\
  &\leq	\frac{C_{\eta,\rho}\sigma_s(x)_1}{s^{1-\frac{1}{p}}}
      +  \frac{D_{\eta,\rho}}{s^{\frac{1}{2}-\frac{1}{p}}}\Bigbr{2\tau +
          \sqrt{\frac{n}{s}}(1+\eta)^{-1}}\frac{\norm{E}_\fro}{n},
    \end{split}
  \end{equation}
  In particular the last step may be improved further by explicitly accounting
  for the bound in \eqref{eq:thm:RIP}. Instead we have assumed only $n>1$
  so that
  $\frac{n}{\sqrt{m}}=\frac{n}{{\sqrt{2n(n-1)}}}\leq \sqrt{2}$.
  
  A possible concrete choice of the not yet specified numbers is
  $\eta=1/3$ and $\delta = 1/6$, see here also Figure
  \ref{fig:nspparam1}.  In this case we have $\kappa_{\eta} = 2$ and
  $\rho\leq 0.18$, hence $\kappa\rho < 1$ is fulfilled,
  $\tau\leq 1.15 $ and
  $C_{\eta,\rho}\leq 11.36, D_{\eta,\rho}\leq 20.73$.
Plugging into \eqref{eq:main:subgaussian:2} yields the desired inequality \eqref{eq:thm:main:subgaussian}
 \begin{align*}
       \smallnorm{x^\sharp-x}_p
       \leq	\frac{c_2\sigma_s(x)_1}{s^{1-\frac{1}{p}}}
       +  \frac{c_3\br{c_4 +
           \sqrt{\frac{n}{s}}}}{s^{\frac{1}{2}-\frac{1}{p}}}\frac{\norm{E}_\fro}{n}
  \end{align*}
  with constants 
  \begin{align*}
    c_2&=C_{\eta,\rho}\leq 11.36,\\ c_3&=D_{\eta,\rho}(1+\eta)^{-1}\leq 15.55,\\ c_4&=2\tau(1+\eta)\leq 3.07.
  \end{align*}
  The probability for \eqref{eq:M+bound1},  \eqref{eq:M+bound2} to hold can be estimated as
  \begin{align*}
      &1 - 2N\exp \bigbr{-\frac{c}{18\psi_2^4}\cdot n }\\
      \geq & 1  - 2\exp\bigbr{-\frac{c}{36\psi_2^4}\cdot n}
  \end{align*}
  if $n\geq \frac{36\psi_2^4 \log(N)}{c}$. Taking a union bound with \eqref{eq:NSPprob} gives a probability of at least $1 - 4 \exp\bigbr{-c_1\cdot n }$ with 
  $c_1\coloneqq\min \{ \frac{c}{36\psi_2^4}, \hat{c}\sqrt{\alpha}, \frac{1}{2}C_1 \}$
  for \eqref{eq:thm:main:subgaussian} to hold if also $n\geq\frac{2\log(4N)}{C_1}$, where $c$ is the constant from the Hanson-Wright inequality and $\hat{c}, \alpha, C_1$ are the same as in Theorem \ref{thm:RIP} and depend on $\psi_2$ but not on the dimensions.
  \begin{figure}
    \hspace*{-1em}
    \includegraphics[width=1.1\linewidth]{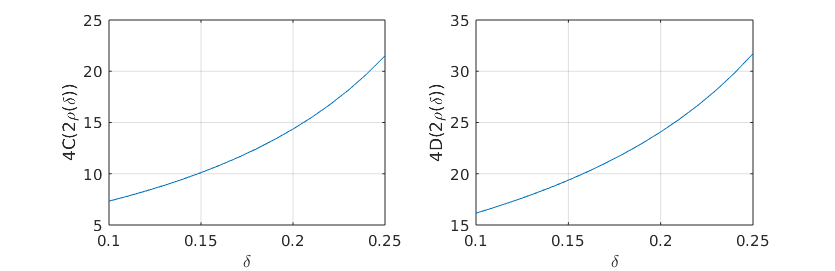}
    \caption{Dependency of the constants
      $C_{\eta,\rho}$ and $D_{\eta,\rho}$
      in \eqref{eq:main:subgaussian:2}
      depending on $\delta$ for fixed $\eta=1/3$ (yielding
      $\kappa_\eta=2$ and therefore $C_{\eta,\rho}=4C(2\rho(\delta)))$
      and $D_{\eta,\rho}=4D(2\rho(\delta)))$, where $C$ and $D$ are
      defined as in Theorem \ref{thm:1} and shown in Figure \ref{fig:nspparam}).}
    \label{fig:nspparam1}
  \end{figure}

\end{proof}

\section{Numerical Experiments}
In the following we validate our theoretical result
\eqref{eq:thm:main:subgaussian:phasetransition} in Theorem
\ref{thm:main:subgaussian} about the phase transition for successful
recovery via NNLS for Model \ref{model:aa} with numerical experiments.
We performed recovery experiments for dimensions $n=20,\dots, 30$ and
sparsity range $s=20,\dots, 150$. For every pair $(n,s)$ we have performed
$20$ experiments with randomly generated vectors $\{a_i\}_{i=1}^N$
with independent standard normal entries and a nonnegative sparse
vector $x\in\R^N$. The support of $x$ is generated uniformly over all
possible $\binom{N}{s}$ combinations. The nonnegative values on the support are
generated independently as absolute values from a standard normal
distribution.  Given the noiseless measurement $Y=\A(x)$, we then used
the MATLAB function \texttt{lsqnonneg} to solve the NNLS problem (the
convex program \eqref{eq:generic:nnnorm} for the Frobenius norm)
yielding the estimate $x^\sharp$. We assume that the vector is
successfully recovered if $\|x-x^\sharp\|_2\leq 10^{-4}$. The
corresponding result is shown in Figure \ref{fig:phasetrans}.

\begin{figure}[h]
  \includegraphics[width=1\linewidth]{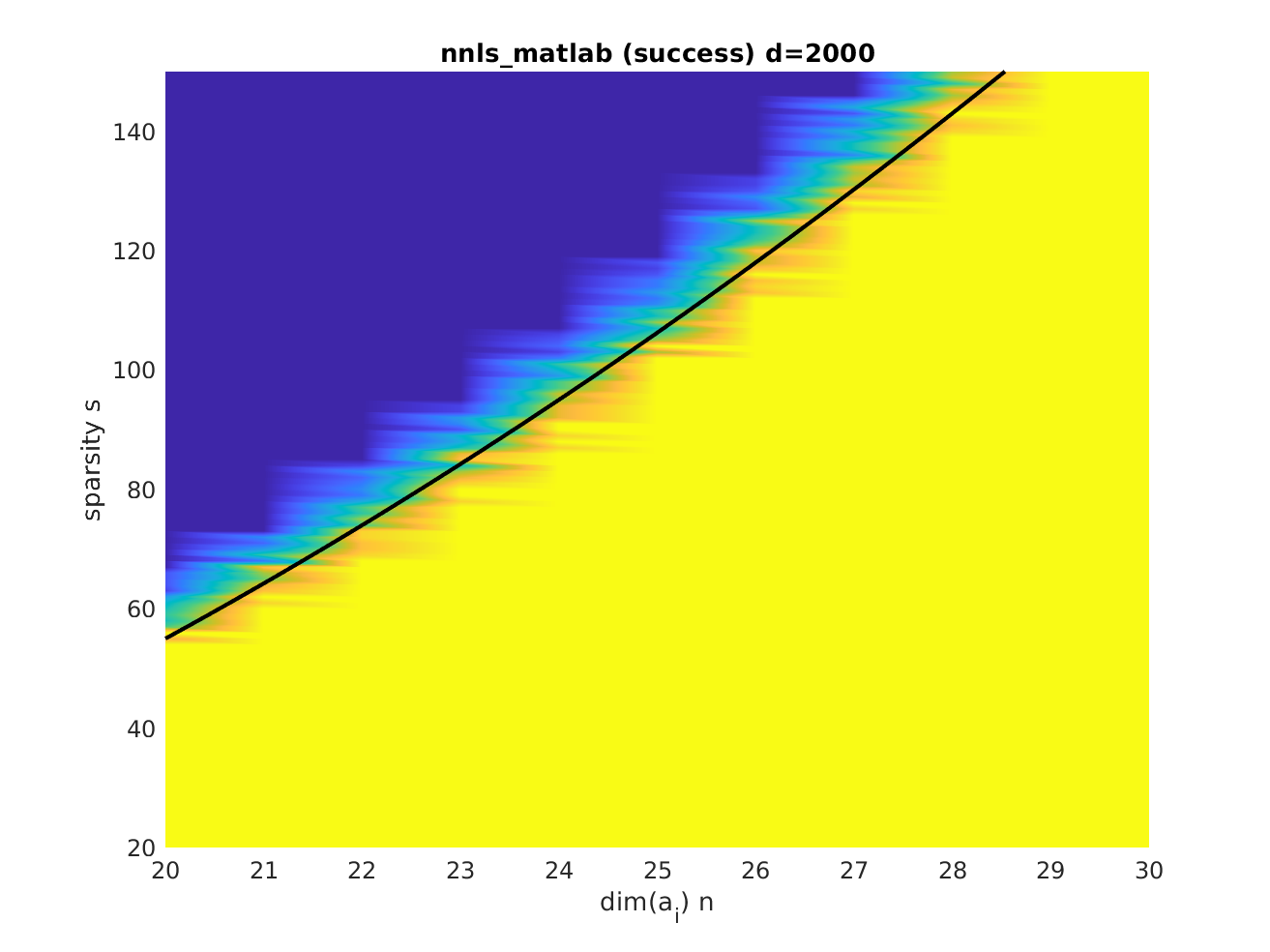}
    \caption{Phase transition for NNLS (the convex program
    \eqref{eq:generic:nnnorm} for the Frobenius norm) in the noiseless
    case (success=light/yellow and failure=blue/dark).
    The function $x\rightarrow x^2/4-x-25$ is overlayed in
    black.}
  \label{fig:phasetrans}
\end{figure}

\section*{Acknowledgments}
We thank Alexander Fengler, Radoslaw Adamczak and Saeid
Haghighatshoar.
PJ has been supported by DFG grant JU 2795/3.
The work was partially supported by DAAD grant 57417688.

\begin{appendices}
  
  \section{Hanson Wright Inequality}
  The Hanson-Wright inequality is an important tool to calculate tail bounds for sub-Gaussian random vectors. We first state it for the real case, taken from \cite[Theorem 1.1]{Rudelson2013}.
  \begin{thm}[Hanson-Wright inequality]
    Let $a = (a_1,\ldots,a_n)\in\R^n$ be a random vector with
    independent and centered
    sub-Gaussian components and $Z\in\R^{n \times n}$. For all $t\geq 0$ it holds that
    \begin{equation}\begin{split}
      \label{eq:Hanson-Wright}
      &\prob{\abs{\langle a, Z a\rangle - \expec{\langle a, Z a\rangle}} > t} \\
      &\leq 2 \exp(-c \min\{\frac{t^2}{K^4 \norm{Z}_F^2}, \frac{t}{K^2 \norm{Z}_o}\}),
      \end{split}
    \end{equation}
    where $K$ is a bound on the $\psi_2$-norms of the components of $a$ and $c>0$ a universal constant.
  \end{thm}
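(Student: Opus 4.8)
The plan is to follow the standard exponential-moment route. By homogeneity it suffices to treat the normalized case $K=1$: replacing $a$ by $a/K$ multiplies $\scp{a}{Za}-\expec{\scp{a}{Za}}$ by $K^{-2}$, and tracking this factor turns a tail bound for the normalized quantity into \eqref{eq:Hanson-Wright}, the powers $K^4$ and $K^2$ appearing exactly in the two regimes. I would then split the centered quadratic form into its diagonal and off-diagonal parts,
\begin{equation*}
  \scp{a}{Za}-\expec{\scp{a}{Za}} = \sum_i Z_{ii}\br{a_i^2-\expec{a_i^2}} + \sum_{i\neq j} Z_{ij}\, a_i a_j,
\end{equation*}
bound each separately, and combine them by a union bound, absorbing the constant factor into $c$.

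For the diagonal part, each summand $Z_{ii}\br{a_i^2-\expec{a_i^2}}$ is a centered sub-exponential variable: from $\normpsi{a_i}{2}\leq 1$ one obtains $\normpsi{a_i^2-\expec{a_i^2}}{1}\leq C$ for an absolute constant. As these are independent, Bernstein's inequality for sub-exponential sums gives a tail of the form $2\exp(-c\min\set{t^2/\sum_i Z_{ii}^2,\ t/\max_i|Z_{ii}|})$. Bounding $\sum_i Z_{ii}^2\leq\norm{Z}_F^2$ and $\max_i|Z_{ii}|\leq\norm{Z}_o$ yields precisely the claimed two-regime estimate for this contribution.

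The off-diagonal term $S\coloneqq\sum_{i\neq j}Z_{ij}a_ia_j$ is the crux, and I would treat it by the Chernoff bound applied to $\expec{\exp(\lambda S)}$. First I would invoke a standard decoupling inequality to pass from $S$ to the decoupled chaos $\sum_{i,j}Z_{ij}a_ia'_j$ with $a'$ an independent copy of $a$, at the cost of an absolute constant. Conditioning on $a$, the decoupled sum $\sum_j (Z^\top a)_j a'_j$ is a weighted sum of independent sub-Gaussians, so its conditional moment-generating function is at most $\exp(C\lambda^2\norm{Z^\top a}_2^2)$. It then remains to control $\expec{\exp(C\lambda^2\scp{a}{ZZ^\top a})}$, the exponential moment of a positive-semidefinite quadratic form; this I would reduce to the Gaussian case, where rotation invariance (via the SVD of $Z$) factors the chaos MGF as $\prod_k(1-\lambda^2 s_k^2)^{-1/2}$ over the singular values $s_k$ of $Z$. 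Taking logarithms and using $\sum_k s_k^2=\norm{Z}_F^2$ and $\max_k s_k=\norm{Z}_o$ gives $\log\expec{\exp(\lambda S)}\lesssim\lambda^2\norm{Z}_F^2$ on the range $|\lambda|\lesssim 1/\norm{Z}_o$. Optimizing $\exp(-\lambda t+C\lambda^2\norm{Z}_F^2)$ over $\lambda$ in this range produces the exponent $-c\min\set{t^2/\norm{Z}_F^2,\, t/\norm{Z}_o}$, the second regime being forced by the constraint $|\lambda|\lesssim 1/\norm{Z}_o$.

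I expect the main obstacle to lie entirely in this off-diagonal estimate, specifically the decoupling step together with the reduction of the sub-Gaussian chaos to an explicitly computable Gaussian chaos, since it is through the singular values of $Z$ that the sharp dependence on $\norm{Z}_F$ and $\norm{Z}_o$ emerges. The diagonal contribution and the concluding union bound are routine by comparison.
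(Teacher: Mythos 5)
Your proposal is correct, but note that the paper itself offers no proof of this statement: it is imported verbatim from \cite[Theorem 1.1]{Rudelson2013}, the paper only proving the complex extension (Theorem \ref{thm:HWcomplex}) by reduction to this real case. Your sketch --- homogeneity reduction to $K=1$, the diagonal/off-diagonal split with Bernstein's inequality for the sub-exponential diagonal sum (using $\sum_i Z_{ii}^2\leq\norm{Z}_F^2$ and $\max_i\abs{Z_{ii}}\leq\norm{Z}_o$), and decoupling plus comparison of the sub-Gaussian chaos to the explicitly computable Gaussian chaos MGF $\prod_k(1-\lambda^2 s_k^2)^{-1/2}$, with Chernoff optimization constrained to $\abs{\lambda}\lesssim 1/\norm{Z}_o$ --- is precisely the argument of that cited reference, so it is a correct proof taking essentially the same route as the source the paper relies on.
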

  The complexifications have been discussed in \cite[Sec. 3.1]{Rudelson2013}.
  One important application for us is bounding the deviation of the Euclidian norm squared of a complex vector $a\in\C^n$ from its mean by writing   
  \begin{align*}
      \norm{a}_2^2= \norm{\tilde{a}}_2^2 = \scp{\tilde{a}}{I_{2n}\tilde{a}},
  \end{align*}
  where $\tilde{a}\coloneqq\begin{bmatrix}\re(a)\\ \im(a)\end{bmatrix}\in\R^{2n}$ and $I_{2n}$ is the $2n\times 2n$ identity matrix with $\norm{I_{2n}}_F^2=2n$ and $\norm{I_{2n}}_o=1$. 
  But we can furthermore even state a complete complex version.

  \begin{thm}[Hanson-Wright inequality, complex version]
  \label{thm:HWcomplex}
    Let $a = (a_1,\ldots,a_n)\in\C^n$ be a random vector so that $\re(a_i),\im(a_i)$ are
    independent and centered
    sub-Gaussian random variables and let $Z\in\C^{n \times n}$. For all $t\geq 0$ it holds that
    \begin{equation}\begin{split}
      \label{eq:HWcomplex}
      &\prob{\abs{\langle a, Z a\rangle - \expec{\langle a, Z a\rangle}} > t} \\
      &\leq 4 \exp(-c \min\{\frac{t^2}{4K^4 \norm{Z}_F^2}, \frac{t}{\sqrt{2}K^2 \norm{Z}_o}\}),
      \end{split}
    \end{equation}
    where $K$ is a bound on the $\psi_2$-norms of the real and imaginary parts of the components of $a$ and $c>0$ the same constant as in \eqref{eq:Hanson-Wright}.
  \end{thm}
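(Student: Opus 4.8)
The plan is to reduce the complex statement to the real Hanson--Wright inequality \eqref{eq:Hanson-Wright} by realifying both the random vector and the quadratic form. Write $a = x + \i y$ with $x = \re(a)$, $y = \im(a)$, and collect the real coordinates in $\tilde{a} = \begin{bmatrix} x \\ y \end{bmatrix} \in \R^{2n}$. By hypothesis the $2n$ entries of $\tilde{a}$ are independent, centered and sub-Gaussian with $\psi_2$-norms at most $K$, so $\tilde{a}$ is exactly of the type admitted by \eqref{eq:Hanson-Wright}. Writing $Z = B + \i C$ with real $B, C$, the Hermitian form $\scp{a}{Za} = a^\herm Z a$ is complex-valued, and a direct expansion shows that its real and imaginary parts are real quadratic forms in $\tilde{a}$:
\begin{align*}
  \re \scp{a}{Za} &= \tilde{a}^T Z_1 \tilde{a}, &
  \im \scp{a}{Za} &= \tilde{a}^T Z_2 \tilde{a},
\end{align*}
with $Z_1 = \begin{bmatrix} B & -C \\ C & B \end{bmatrix}$ and $Z_2 = \begin{bmatrix} C & B \\ -B & C \end{bmatrix}$, the real $2n \times 2n$ matrix representations of multiplication by $Z$ and by $-\i Z$, respectively.

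Next I would control the two norms of $Z_1$ and $Z_2$ that enter \eqref{eq:Hanson-Wright}. Summing the four blocks immediately gives $\norm{Z_j}_F^2 = 2(\norm{B}_F^2 + \norm{C}_F^2) = 2\norm{Z}_F^2$, i.e. $\norm{Z_j}_F = \sqrt{2}\,\norm{Z}_F$ for $j = 1, 2$. For the operator norm I would exploit that $v \mapsto \tilde{v}$ is a norm-preserving bijection $\C^n \to \R^{2n}$ satisfying $Z_1 \tilde{v} = \widetilde{Zv}$ and $Z_2 \tilde{v} = \widetilde{(-\i Z)v}$; this intertwining forces $\norm{Z_1}_o = \norm{Z}_o$ and $\norm{Z_2}_o = \norm{-\i Z}_o = \norm{Z}_o$. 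Since \eqref{eq:Hanson-Wright} is stated for arbitrary real matrices, it applies to the (non-symmetric) $Z_j$ without further adjustment.

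Finally I would assemble the two scalar tail bounds. Because $\smallabs{\scp{a}{Za} - \expec{\scp{a}{Za}}} > t$ forces at least one of the real or imaginary deviations to exceed $t/\sqrt{2}$, a union bound yields
\begin{equation*}
  \prob{\smallabs{\scp{a}{Za} - \expec{\scp{a}{Za}}} > t} \leq \sum_{j=1}^{2} \prob{\smallabs{\tilde{a}^T Z_j \tilde{a} - \expec{\tilde{a}^T Z_j \tilde{a}}} > \tfrac{t}{\sqrt{2}}}.
\end{equation*}
Applying \eqref{eq:Hanson-Wright} to each summand with deviation $t/\sqrt{2}$ and substituting $\norm{Z_j}_F = \sqrt{2}\,\norm{Z}_F$ and $\norm{Z_j}_o = \norm{Z}_o$ turns each term into $2\exp(-c\min\{\frac{t^2}{4K^4\norm{Z}_F^2}, \frac{t}{\sqrt{2}K^2\norm{Z}_o}\})$, and the two identical copies sum to the claimed bound \eqref{eq:HWcomplex}. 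The only step requiring genuine care is the operator-norm identity $\norm{Z_j}_o = \norm{Z}_o$: the Frobenius computation and the union bound are routine, whereas verifying that the realification is an isometry --- so that no spurious dimensional factor enters $\norm{Z_j}_o$ --- is where the bookkeeping must be done precisely.
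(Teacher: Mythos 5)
Your proof is correct and takes essentially the same route as the paper's: the identical realification $\scp{a}{Za} = \tilde{a}^T\tilde{Z}_1\tilde{a} + \i\,\tilde{a}^T\tilde{Z}_2\tilde{a}$ with the same block matrices, the same norm identities $\smallnorm{\tilde{Z}_j}_F = \sqrt{2}\norm{Z}_F$ and $\smallnorm{\tilde{Z}_j}_o = \norm{Z}_o$, and the same union bound at threshold $t/\sqrt{2}$ applied to the real Hanson--Wright inequality. Your explicit verification of the operator-norm identity via the intertwining $Z_1\tilde{v} = \widetilde{Zv}$ is a detail the paper asserts without comment, but it is not a different argument.
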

  \begin{proof}
    Taking squares on both sides and using $\abs{\cdot}^2 = \re(\cdot)^2 + \im(\cdot)^2$ yields
    \begin{align}
        &\prob{ \abs{\scp{a}{Za} - \expec{\scp{a}{Za}}}>t}\nonumber\\
         = & \prob{\re^2(\scp{a}{Za} - \expec{\scp{a}{Za}})\nonumber\\
        &+\im^2(\scp{a}{Za} - \expec{\scp{a}{Za}})  >t^2}\nonumber\\
        \leq & \prob{\abs{\re(\scp{a}{Za} - \expec{\scp{a}{Za}})} \geq \frac{1}{\sqrt{2}}t}\label{term:HWrealPart}\\
            &+\prob{ \abs{\im(\scp{a}{Za} - \expec{\scp{a}{Za}})} \geq \frac{1}{\sqrt{2}}t}.\label{term:HWimPart}
    \end{align}{}
    Writing
    \begin{align*}
        &\scp{a}{Za}\\
        =& \bigbr{ \re(a)^T - \i \im(a)^T }\bigbr{ \re(Z) + \i \im(Z) }\\
        &\cdot \bigbr{ \re(a) + \i \im(a) }\\
        =&  \begin{bmatrix} \re(a)^T & \im(a)^T  \end{bmatrix} \begin{bmatrix} \re(Z) & -\im(Z) \\ \im(Z) & \re(Z) \end{bmatrix}
      \begin{bmatrix} \re(a) \\ \im(a) \end{bmatrix}\\
      &+ \i \begin{bmatrix} \re(a)^T & \im(a)^T  \end{bmatrix} \begin{bmatrix} \im(Z) & \re(Z) \\ -\re(Z) & \im(Z) \end{bmatrix}
      \begin{bmatrix} \re(a) \\ \im(a) \end{bmatrix}\\
      \eqqcolon& \tilde{a}^T\tilde{Z}_1\tilde{a} + \i\,\tilde{a}^T\tilde{Z}_2\tilde{a} ,
    \end{align*}
    we can apply the Hanson-Wright inequality for the real case to \eqref{term:HWrealPart} and \eqref{term:HWimPart} with $\smallnorm{\tilde{Z}_{1/2}}_{HS} = \sqrt{2} \norm{Z}_{HS}$ and $\smallnorm{\tilde{Z}_{1/2}}_{o} = \norm{Z}_o$, to obtain the result.
  \end{proof}{}



  \section{Concentration of $4$th order Polynomials - Full Approach}
  \label{app:conc4}
  To calculate the probabilities of the form $\prob{\abs{\smallnorm{X_i}^2_2 - m} \geq \omega m}$ appearing in \eqref{2ndprob}, \eqref{3rdprob}, we observe that $\norm{X_i}_2^2$ is essentially a $4$th order polynomial in the sub-Gaussian random variables $\re(a_{i,k})$ and $\im(a_{i,k})$. 
   Setting $v_k \coloneqq \re(a_{i,k})$ and
    $v_{n+k} \coloneqq \im(a_{i,k})$, a quick calculation shows that we
    can write this as
    \begin{align*}
    \norm{X_i}_2^2
        &=\sum_{k,l\inn,k\neq l}(v^2_k + v^2_{n+k})(v^2_l + v^2_{n+l})\\
        &=\sum_{(k,l)\in I} v_k^2 v_l^2,
    \end{align*}
    setting 
    \begin{equation}\label{eq:setI}
        I=\{(k,l)\inzn\times[2n] : k\neq l, k\neq n+l, l\neq n+k\}
    \end{equation}{}
   The following theorem, which can be seen as a generalization of the Hanson-Wright inequality, allows to analyze these terms. 
  \newcommand{\J}{\mathcal{J}}
  \begin{thm}[Theorem 1.6 in \cite{Goetze2019}]\label{thm:concentrationPoly}
    Let $Z=(Z_1,\dots,Z_\ell)$ be a random vector with independent
    components, such that $\|Z_i\|_{\psi_2}\leq L$ for all $i\in [\ell]$. Then, for
    every polynomial $f:\R^\ell\rightarrow\R$ of degree $D$ and all $t>0$, it holds that 
    \begin{align*}
      & \prob{ |f(Z)-\mathbb{E} f(Z)|\geq t}\\
      &\leq 2\exp\big(-\frac{1}{C_D}\min_{1\leq d\leq D}\min_{\J\in
        P_d}\eta_{\J}(t)\big)
    \end{align*}
    where
    \begin{equation}
      \eta_{\J}(t)=\left(\frac{t}{L^d\|\mathbb{E}\mathbf{D}^d f(Z)\|_{\J}}\right)^{2/\#\J}.\label{eq:defEta}
    \end{equation}
  \end{thm}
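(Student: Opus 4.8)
The plan is to deduce the stated exponential tail from a single two-sided control of the centered moments $\norm{f(Z)-\mathbb{E}f(Z)}_p$ for every $p\geq 2$. Indeed, any random variable $g$ obeying a moment bound of the form $\norm{g}_p\leq \sum_{d=1}^{D}\sum_{\J\in P_d} p^{\#\J/2}\,L^d\,\smallnorm{\mathbb{E}\mathbf{D}^d f(Z)}_{\J}$ satisfies, by Markov's inequality $\prob{|g|\geq t}\leq (\norm{g}_p/t)^p$ and optimization of $p$ term-by-term, exactly the claimed bound with $\eta_{\J}(t)$ as in \eqref{eq:defEta}; the minimum over $d$ and over $\J\in P_d$ arises because the slowest-decaying (dominant) summand governs the tail after splitting $t$ across the finitely many partitions. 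So the entire problem reduces to establishing this polynomial moment growth, with the $D$-dependent constant $C_D$ absorbing the combinatorial losses below.

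First I would expand $f$ about its mean by the exact (finite) Taylor formula, writing $f(Z)-\mathbb{E}f(Z)$ as a sum over orders $d=1,\dots,D$ of homogeneous terms in which the fixed coefficient tensors $\mathbb{E}\mathbf{D}^d f(Z)$ are contracted against products $\prod_{j}(Z_{i_j}-\mathbb{E}Z_{i_j})$ of centered coordinates. By the triangle inequality in $L_p$ it then suffices to bound the $p$-norm of each homogeneous chaos of order $d$ separately and to sum the resulting estimates over $1\leq d\leq D$.

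The core step is a moment estimate for a subgaussian chaos of fixed order $d$: for a symmetric tensor $A$ one expects $\norm{\sum A_{i_1\dots i_d}\prod_j \bar Z_{i_j}}_p\lesssim_D \sum_{\J\in P_d} p^{\#\J/2}\,\norm{A}_{\J}$, where $\bar Z=Z-\mathbb{E}Z$ and $\norm{A}_{\J}$ is the partition (injective-tensor) norm indexing the block structure $\J$ of $[d]$. This is Latała's moment formula for Gaussian chaos; to pass to general independent subgaussian coordinates I would first \emph{decouple} the chaos, replacing repeated indices by independent copies $Z^{(1)},\dots,Z^{(d)}$ at the cost of a constant depending only on $d$, and then compare the decoupled subgaussian chaos with its Gaussian analogue via hypercontractivity (equivalently, a coordinatewise moment comparison), which introduces the factor $L^d$. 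The diagonal contributions suppressed by decoupling are themselves lower-order chaoses and are reabsorbed into the $d'<d$ terms.

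The main obstacle is precisely this chaos moment comparison together with the bookkeeping of the partition norms: Latała's formula is combinatorially delicate, and its passage from the Gaussian to the genuinely subgaussian case is not automatic, since one must control \emph{all} mixed moments through hypercontractive estimates while keeping the dependence on each partition $\J$ explicit, so that the assembled bound reproduces $\eta_{\J}$ rather than a cruder aggregate. Once the per-order moment bounds are in hand and summed over $d$, converting the unified moment growth $\norm{f(Z)-\mathbb{E}f(Z)}_p\lesssim_D \sum_{d,\J} p^{\#\J/2}L^d\smallnorm{\mathbb{E}\mathbf{D}^d f(Z)}_{\J}$ into the stated tail is the routine optimization over $p$ sketched in the first paragraph.
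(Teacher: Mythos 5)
First, a point of reference: the paper itself gives no proof of Theorem \ref{thm:concentrationPoly} --- it is imported verbatim from \cite{Goetze2019} (in the sub-Gaussian case it is the polynomial concentration inequality going back to Adamczak and Wolff), so your proposal has to be judged against the proofs in that literature. Your outer architecture is indeed theirs: establish moment bounds of the form $\norm{f(Z)-\expec{f(Z)}}_p \lesssim_D \sum_{d=1}^{D}\sum_{\mathcal{J}\in P_d} p^{\#\mathcal{J}/2}\, L^d\, \smallnorm{\expec{\mathbf{D}^d f(Z)}}_{\mathcal{J}}$ by decoupling and comparison with Lata\l{}a's two-sided moment estimates for Gaussian chaoses, then convert moments to tails via Markov's inequality and optimization in $p$, splitting $t$ over the finitely many pairs $(d,\mathcal{J})$ at the cost of a $D$-dependent constant. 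Those two steps are correct and standard.

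The genuine gap is the middle reduction, which you dispatch with an ``exact (finite) Taylor formula.'' The claimed identity $f(Z)-\expec{f(Z)}=\sum_{d=1}^{D}\frac{1}{d!}\bigscp{\expec{\mathbf{D}^d f(Z)}}{(Z-\expec{Z})^{\otimes d}}$ is false: take $\ell=1$, $f(z)=z^2$, $Z$ centered with variance $\sigma^2$; the left side is $Z^2-\sigma^2$ while your expansion gives $Z^2$ (since $\expec{f'(Z)}=0$ and $\expec{f''(Z)}=2$), and for degree $3$ the discrepancy is even a genuinely random lower-order term ($3\sigma^2 Z$ for $f(z)=z^3$), not a constant. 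What you are recalling is the Hermite--Wick expansion, which is exact only for Gaussian $Z$ and with Wick powers in place of plain centered products; the exact Taylor formula at the mean instead produces the coefficients $\mathbf{D}^d f(\expec{Z})$, whose partition norms are not the ones appearing in \eqref{eq:defEta}. Bridging this --- transferring from $\mathbf{D}^d f(\expec{Z})$ or Wick-type coefficients to $\expec{\mathbf{D}^d f(Z)}$ for general sub-Gaussian coordinates --- is precisely the technical heart of the cited proofs, not bookkeeping. Relatedly, your remark that diagonal contributions ``are themselves lower-order chaoses and are reabsorbed'' does not go through as stated: the centered powers that arise, e.g.\ $Z_i^2-\expec{Z_i^2}$, are only sub-exponential ($\psi_1$), not sub-Gaussian, so they cannot be recycled as coordinates of a lower-order sub-Gaussian chaos without destroying exactly the $L^d$ and $p^{\#\mathcal{J}/2}$ dependence the theorem asserts; handling them is what forces the full family of partition norms with non-singleton blocks. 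So: right roadmap and correct endgame, but the one step you wave through is where the proof actually lives, and your version of it is wrong as written.
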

  \newcommand{\D}{\mathbf{D}}
  Here $\mathbf{D}^d f$ is the $d$-th derivative of $f$ and for a
  multi-index array $W=(w_{i_1\dots i_d})_{i_1\dots i_d=1}^\ell$ the $\norm{\cdot}_\J$-norm is defined as
  \begin{equation}
    \begin{split}
      \|W\|_\J:=\sup\{&\sum_{\mathbf{i}\in[\ell]^d}w_{\mathbf{i}}\prod_{l=1}^k(x_{\mathbf{i}_{J_l}})^l\,|\,
      \|x_{\mathbf{i}_{J_l}}\|_2\leq1\\
      & \text{for all } l\in [k]\},
    \end{split}
  \end{equation}
  where $\J=(J_1,\dots,J_k)\in P_d$ is a partition of $[d]$
  into non-empty, pairwise disjoint sets. Some examples are:
  \begin{equation*}
    \begin{split}
      \|W\|_{\{1,2\}} &=\|W\|_F\\
      \|W\|_{\{1\}\{2\}} &=\|W\|_o\\
      \|W\|_{\{1,2\}\{3\}} &=\sup_{\|x\|_F\leq 1\,\&\,\|y\|_2\leq 1}\sum_{ijk} w_{ijk}x_{ij}y_j
    \end{split}
  \end{equation*}
    
    Our first calculation allows the analysis of the deviation of $\norm{Z}_2^2$ from its mean for a complex sub-Gaussian random vectors $Z$ with iid. components.
  \begin{prop}
    Let $Z=(Z_1,\dots,Z_{2n})$ be a random vector with independent
    components, such that $\|Z_i\|_{\psi_2}\leq L$,
    $|\expec{ Z_i} |\leq \mu$ and $\expec{Z_i^2}\leq \frac{1}{2} \sigma^2$ for some $L\geq 1,\, \mu,\sigma^2\geq 0$ and all
    $i\inzn$.  Consider the $4$-th order polynomial
    \begin{align*}
      f: \,\R^{2n} \rightarrow \R,\quad
          v \mapsto \sum_{(k,l)\in I} v_k^2 v_l^2 ,
    \end{align*}
    where $I$ is given as in \eqref{eq:setI}.
    Assume $n\geq 2$\note{?}.
    Then for all $\omega>0$ it holds that
    \begin{align*}
      &\prob{ |f(Z)-\mathbb{E} f(Z)|\geq n(n-1)\omega}\\
      &\leq 2\exp(-\gamma\,\zeta\cdot n)
    \end{align*}
    where $\gamma\in (0,1)$ is an absolute constant and 
    \begin{equation}\begin{split}\label{eq:minZeta}
      \zeta=&\min\{\frac{\omega^2}{L^2\mu^2\sigma^4},\frac{\omega}{L^2(\sigma^2 + 2 \mu^2)},\frac{\omega^2}{L^4 (\sigma^2 + \mu^2)^2},\\
    &\frac{\omega^{2/3}}{L^2\mu^{2/3}}, \frac{\omega}{L^3\mu},\frac{\omega^2}{L^6 \mu^2}\cdot n, \frac{\omega^{1/2}}{L^2},
     \frac{\omega^{2/3}}{L^{8/3}}, \frac{\omega}{L^4},\frac{\omega^2}{L^8}\cdot n
                    \}.
    \end{split}\end{equation}
    \label{prop:conc:4ord}
  \end{prop}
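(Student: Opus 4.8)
The plan is to apply the polynomial concentration inequality of Theorem \ref{thm:concentrationPoly} directly, with degree $D=4$, threshold $t=n(n-1)\omega$, and the uniform subgaussian bound $L$, and then to identify each of the ten terms in $\zeta$ with the contribution $\eta_{\mathcal{J}}(t)$ of one partition $\mathcal{J}\in P_d$ as $d$ ranges over the derivative orders $1,2,3,4$. Since $f$ is a fixed degree-$4$ polynomial, the reduction is purely a matter of computing the derivative tensors $\mathbb{E}\mathbf{D}^d f(Z)$ and their $\|\cdot\|_{\mathcal{J}}$-norms, and then performing the substitution into \eqref{eq:defEta}.

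First I would compute $\mathbf{D}^d f$ explicitly. Because $I$ is symmetric and $f(v)=\sum_{(k,l)\in I}v_k^2v_l^2$, differentiation is combinatorial: $\partial_a f=4v_a\sum_{l:(a,l)\in I}v_l^2$, and each further derivative either lowers a power of some $v$ or produces an indicator $\mathbb{1}[(a,b)\in I]$. Taking expectations and using $|\mathbb{E} Z_i|\leq\mu$, $\mathbb{E} Z_i^2\leq\tfrac12\sigma^2$ together with independence, each entry of $\mathbb{E}\mathbf{D}^d f(Z)$ is an $O(1)$ combinatorial constant times a product of at most one factor $\mu$ (per surviving first power) and factors $\tfrac12\sigma^2$ (per surviving square). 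Crucially, every such tensor is supported on only $O(n)$ diagonal-type or $O(n^2)$ off-diagonal-type entries indexed by $I$. In particular $\mathbb{E}\mathbf{D}^4 f$ is a deterministic tensor supported on $|I|=4n(n-1)$ entries, which gives $\|\mathbf{D}^4 f\|_{\{1,2,3,4\}}=\|\mathbf{D}^4 f\|_F=O(n)$; this is the source of the explicit factor $n$ in the last term of $\zeta$.

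Next, for each partition I would bound $\|\mathbb{E}\mathbf{D}^d f\|_{\mathcal{J}}$ from above. The Frobenius norms ($\#\mathcal{J}=1$) are immediate from the sum of squared entries; the fully separated injective norms ($\#\mathcal{J}=d$) follow from Cauchy--Schwarz after factoring $\sum_{(k,l)\in I}(\cdots)$ into products of one-dimensional inner products (for $d=4$ this collapses the bulk to $O(1)$, yielding the $\omega^{1/2}/L^2$ term). Substituting into $\eta_{\mathcal{J}}(t)=(t/(L^d\|\mathbb{E}\mathbf{D}^d f\|_{\mathcal{J}}))^{2/\#\mathcal{J}}$ with $t=n(n-1)\omega$ then yields, after simplification, one of the listed terms of $\zeta$ multiplied by $n$: the exponent $2/\#\mathcal{J}$ converts $L^d$ into $L^{2d/\#\mathcal{J}}$ (reproducing the exponents $2,8/3,3,4,6,8$ occurring in $\zeta$) and balances the various powers of $n$ so that a single overall factor $n$ survives, with the Frobenius norms of $\mathbb{E}\mathbf{D}^3 f$ and $\mathbf{D}^4 f$ supplying the two terms that carry an additional explicit $n$. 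Full symmetry of $\mathbb{E}\mathbf{D}^d f$ collapses partitions of the same block type to equal norms, so only the ten listed terms arise. For the second derivative, whose entries mix a diagonal $\sigma^2$-part with an off-diagonal $\mu^2$-part, I would combine the two contributions --- by the triangle inequality for the operator norm and by $\sigma^4+\mu^4\leq(\sigma^2+\mu^2)^2$ together with $n\geq1$ for the Frobenius norm --- to obtain the compound denominators $\sigma^2+2\mu^2$ and $(\sigma^2+\mu^2)^2$ appearing in $\zeta$.

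Finally I would take the minimum over all $(d,\mathcal{J})$ and absorb the universal constant $C_4$ of Theorem \ref{thm:concentrationPoly} together with all combinatorial and unification constants into a single $\gamma\in(0,1)$, obtaining $\min_{d,\mathcal{J}}\eta_{\mathcal{J}}(t)\geq\gamma\,\zeta\,n$ and hence the claimed bound. The main obstacle I anticipate is the accurate evaluation of the intermediate mixed norms $\|\mathbb{E}\mathbf{D}^d f\|_{\mathcal{J}}$ for $\#\mathcal{J}\in\{2,3\}$ of the third and fourth derivatives: unlike the Frobenius and injective extremes, these require choosing test matrices and vectors matched to the pairing structure of $I$ in order to pin down the exact power of $n$, and it is precisely these powers that dictate which terms of $\zeta$ carry a factor $n$ and which do not. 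The remaining accounting of absolute constants is routine but lengthy.
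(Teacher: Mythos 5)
Your proposal follows essentially the same route as the paper's proof: a direct application of Theorem \ref{thm:concentrationPoly} with $D=4$ and $t\asymp n(n-1)\omega$, explicit computation of the expected derivative tensors $\mathbb{E}\mathbf{D}^d f(Z)$, partition-by-partition upper bounds on the $\norm{\cdot}_{\mathcal{J}}$-norms (with the diagonal/off-diagonal split of $\mathbb{E}\mathbf{D}^2 f$ yielding the compound denominators $\sigma^2+2\mu^2$ and $(\sigma^2+\mu^2)^2$, and the Frobenius norms at $d=3,4$ supplying the two terms of $\zeta$ carrying an explicit factor $n$), followed by taking the minimum over all $(d,\mathcal{J})$ and absorbing all constants into $\gamma$. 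The paper executes precisely this plan, working out the representative cases $\mathcal{J}=\{1\}$ and $\mathcal{J}=\{1,2\}\{3\}$ in detail and tabulating the remaining partition norms via the same Cauchy--Schwarz/test-vector estimates you anticipate as the main technical work.
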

  Note that two of the terms in \eqref{eq:minZeta} contain a factor $n$ and will therefore not play a role for large $n$.
  \begin{proof}
    The partial derivatives are
    \begin{align*}
      \partial_{i} f(v)
      &=4v_i\sum_{k\inzn,(i,k)\in I} v_k^2 \\
      \partial_{i,i} f(v)
      &=	4\sum_{k\inzn, (i,k)\in I} v_k^2 ,\\
      \partial_{i,j} f(v)
      &=	8v_iv_j,\\
      \partial_{i,i,j}f(v) 
      &= 8v_j\\
      \partial_{i,i,j,j} f(v) &=8,\\
    \end{align*}
    for all $(i,j)\in I$. All combinations not mentioned here are zero or follow from the calculations above by Schwarz's theorem about mixed partial derivatives. 
    We have to estimate \eqref{eq:defEta} for all possible partitions $\J$ and $t=\omega m$. We will only state some of the calculations here, the others follow in a similar manner. Note that $\# I =2n(2n-2) $ and  for any $i\inzn$, there are  $2(n-1)$ indices $k\inzn$ such that $(i,k)\in I$.
    For the case $\J=\{1\}$,
    \begin{align*}
      &\norm{\expec{ \D^1 f(Z)}}_{\{1\}}\\
      =&\sup\{ \sum_{i\in[2n]}\expec{\partial_{i} f(Z)}x_{i}       \,|\,x\in\R^{2n} \text{ with } \norm{x}_2\leq 1\},
    \end{align*}
    let $x\in\R^{2n}$ with $\norm{x}_2\leq 1$. Since
    \begin{align*}
      &\sum_{i\in[2n]}\expec{\partial_{i} f(Z)} x_{i}    =4\sum_{(i,k)\in I}\expec{Z_i}\,\expec{Z_k^2} x_{i} \\
      \leq&	4 \mu \sigma^2 \sum_{(i,k)\in I} x_{i}
      \leq 4 \mu\sigma^2 (2n-2) \norm{x}_1 \\
      \leq& 8 \mu\sigma^2 (n-1)\sqrt{2n},
    \end{align*}
    we get the estimate
    \begin{align*}
      \eta_{\{1\}}(\omega m)	&\geq	\br{\frac{2\omega n(n-1) }{L^1 \cdot 8\mu \sigma^2 (n-1)\sqrt{2n}}}^{2/1}	\\
                                &=	\frac{\omega^2}{32L^2 \mu^2 \sigma^4 }\cdot n.
    \end{align*} 
    \newcommand{\F}{\bold{F}}
    To illustrate another important technique, consider
    $\J=\{1,2\}\{3\}$ and $x\in\R^{2n\times2n}, y\in\R^{2n}$ with
    $\|x\|_\fro=\|y\|_2=1$. We can assume $x,y\geq 0$, entrywise, to calculate the upper bound.
    \begin{align*}
      &\expec{\D^3 f(Z)}(x,y)	\\
      =& \sum_{i,j,k\inzn} \expec{\partial_{i,j,k}f(Z)}x_{ij}y_k \\
      =&\sum_{(i,j) \in I}8\expec{Z_j}x_{ii}y_j + 8\expec{Z_j}x_{ij}y_i + 8\expec{Z_j}x_{ji}y_i\\
      \leq	&8\mu \sum_{(i,j)\in I} x_{ii} y_j+ x_{ij}y_i + x_{ji} y_i\\
      \leq&	8 \mu \big( \smallnorm{\diag(x)}_1 \smallnorm{y}_1	+ \sum_{j\inzn} (\smallscp{x_j}{y} +  \smallscp{{}_jx}{y}) \big)\\
      \leq	& 	8\mu \big( \sqrt{2n}\cdot\sqrt{2n}		+ \sum_{j\inzn} (1+ 1)\big)\\
      =	&48\mu n,
    \end{align*}
    where by $x_j, {}_jx$ we denoted the $j$-th row respectively column of $x$ and by $\diag(x)$ the $2n$-vector containing its diagonalelements. This shows
    \begin{align*}
      \eta_{\{1,2\}\{3\}}(\omega m)	\geq	\br{\frac{2\omega n(n-1)}{L^3\cdot 48\mu n}}^{2/2} \geq \frac{\omega}{48 \mu L^3}\cdot n,
    \end{align*}
    where we used that $n-1\geq \frac{1}{2}n$ because $n\geq 2$.
    The other cases follow in a similar manner, the sums can be estimated directly or using the Cauchy Schwarz inequality by euclidian or $1$-norms of tensors with unit norm or by norms of their columns, rows or diagonal elements. We
    only state the results here:

    \begin{align*}
      \eta_{\{1\}\{2\}}(\omega m)	&\geq	\frac{\omega}{4L^2(\sigma^2 + 2 \mu^2)} \cdot n
 \\
      \eta_{\{1,2\}}(\omega m)	&\geq \frac{\omega^2}{32 L^4(\sigma^2 +  \mu^2 )^2}\cdot n\\
      \eta_{\{1\}\{2\}\{3\}}(\omega m)	&\geq\frac{\omega^{2/3}}{192^{2/3}\mu^{2/3} L^2}\cdot n\\
      \eta_{\{1,2,3\}} (\omega m) &\geq \frac{\omega^{2}}{48^2 \mu^2 L^6}\cdot n^2\\
      \eta_{\{1\}\{2\}\{3\}\{4\}}( \omega m)	&\geq		\frac{\omega^{1/2}}{24^{1/2}L^2}\cdot n\\
      \eta_{\{1,2\}\{3\}\{4\}}( \omega m)	&\geq	\frac{\omega^{2/3}}{64^{2/3} L^{8/3}}\cdot n\\
      \eta_{\{1,2\}\{3,4\}}( \omega m)	&\geq	\frac{\omega}{128L^{4}}\cdot n\\
      \eta_{\{1,2,3\}\{4\}}( \omega m)	&\geq\frac{\omega}{48 L^{4}}\cdot n\\
      \eta_{\{1,2,3,4\}}( \omega m)	&\geq \frac{\omega^2}{48^2L^{8}}\cdot n^2.
    \end{align*}

   \end{proof}

  \section{The $\psi_r$--norm via Moments}
  \label{app:psir:moments}
  It is well-known, see \cite{Vershynin:datasciencebook}, that
  \begin{equation}\label{eq:psi_r_equivalent}
    \norm{X}_{\psi_r} = \sup_{p\geq 1} p^{-1/r} (\expec{\abs{X}^p})^{1/p}
  \end{equation}
  is equivalent to \eqref{eq:def:psip}.  Now, let $a\in\C^n$ be a
  random vector with subgaussian entries and 
  $\norm{\re(a_i)}_{\psi_2},\norm{\im(a_i)}_{\psi_2}\leq \psi_2$ for a
  constant $\psi_2$. In this section we show how to estimate the
  $\psi_r$-norm for $r\geq 1$ of the matrix $aa^*$ by $\psi_2$. The $\psi_r$-norm of
  a random matrix $A\in\C^{n\times n}$ is defined as
  \begin{equation*}
    \norm{A}_{\psi_r}\coloneqq\sup_{\norm{Z}_F\leq 1}\norm{\scp{A}{Z}}_{\psi_r}.
  \end{equation*}
  For the matrix $aa^*-\expec{aa^*}$ this can be written as
  \begin{align*}
      \norm{aa^*-\expec{aa^*}}_{\psi_r} = \sup_{\norm{Z}_F\leq 1}\norm{\scp{a}{Za} - \expec{\scp{a}{ZA}}}_{\psi_r}.
  \end{align*}
  Set $Y_Z\coloneqq\scp{a}{Za} - \expec{\scp{a}{Za}}$ for some arbitrary $Z\in\C^{n\times n}$ with $0<\norm{Z}_F \leq 1$. Using \eqref{eq:psi_r_equivalent} we can compute its $\psi_r$-norm as 
  \begin{equation}\label{eq:psi_r_alternative}
      \norm{Y_Z}_{\psi_r} = c_r \cdot \sup_{p\geq 1} \frac{\expec{\abs{Y}^p}^{1/p}}{p^{1/r}},
  \end{equation}
  with some constant $c_r>0$.
  The expectation can be expressed as
  \begin{equation}\label{eq:LPintegral}
    \expec{\abs{Y_Z}^p} =   p\int_0^\infty  t^{p-1} \prob{ \abs{Y_Z} \geq t } \dif t.
  \end{equation}
  The Hanson-Wright inequality \eqref{eq:HWcomplex} yields
  \begin{align*}
      &\prob{\abs{Y_Z} \geq t} \\
      \leq& 4 \exp (-c \min \{ \frac{t^2}{4 \psi_2^4\norm{Z}^2T_{\fro} },\frac{t}{\sqrt{2}\psi_2^2 \norm{Z}_o}\})\\
      \leq& 4 \exp (-c \min \{ \frac{t^2}{4 \psi_2^4 },\frac{t}{\sqrt{2}\psi_2^2 }\})\\
      =& 4\max \bigset{e^{-t^2/a^2}, e^{-t/b}},
  \end{align*}
  where we used $\norm{Z_Y}_o\leq \norm{Z_Y}_F\leq 1$ and abbreviated
  $a\coloneqq\frac{2 \psi_2^2}{\sqrt{c}}$ and  $b\coloneqq \frac{\sqrt{2}\psi_2^2}{c}$. Plugging into  \eqref{eq:LPintegral} and substituting $s\coloneqq t/a$, respectively $s\coloneqq t/b$, we obtain
  \begin{align*}
     \expec{\abs{Y_Z}^p}  \leq&  4p \int_0^\infty s^{p-1} \max\smallset{a^p e^{-s^2}, b^p e^{-s} } \dif t\\
     \leq   &     4p\bigbr{\frac{1}{2}a^p \Gamma(\frac{p}{2}) + b^p \Gamma(p)},
  \end{align*}
  where we estimated the maximum by the sum of both terms and expressed the integrals in terms of the Gamma function. Using the identity $\Gamma(x)x = \Gamma(1+x)$ , for $x>0$, and the asymptotic estimation $\Gamma(x+1) \lesssim x^x$ derived from Stirling's formula, we obtain
  \begin{align*}
      \expec{\abs{Y_Z}^p}   &\leq 4 \bigbr{ a^p \Gamma(\frac{p}{2}+1) + b^p \Gamma(p+1) }\\
      &\leq c' \bigbr{a^p (\frac{p}{2})^{p/2} + b^p p^p}\\
      &\leq 2^{p/2} c' \psi_2^{2p} p^p \bigbr{ c^{-p/2} + c^{-p}},
  \end{align*}
  for some constant $c'>0$. Plugging this into \eqref{eq:psi_r_alternative} yields 
  \begin{align}
      \norm{Y_Z}_{\psi_1} &\leq \sqrt{2} c_r \psi_2^2 \cdot \sup_{p\geq 1} \bigbr{ c^{-p/2} + c^{-p}}^{1/p}\nonumber\\
      &= c'' \psi_2^2,\label{eq:psi_1_2}
  \end{align}
  where $c''$ is some constant that does not depend on the dimensions.
    Since for $r,p\geq 1$ it holds that $p^{-1/r}\leq p^{-1}$, we have $c_r^{-1} \normpsi{Y_Z}{r} \leq c_1^{-1} \normpsi{Y_Z}{1}$ whenever $r\geq 1$. Plugging into \eqref{eq:psi_1_2} and taking the supremum over all $Z\in\C^{n\times n}$ with $\norm{Z}_F \leq 1$ shows that $\normpsi{Y}{r} \leq c'_r \psi_2^2$, for some constant $c'_r$.

\end{appendices}

\bibliography{CS_nonneg}{}
\bibliographystyle{plainurl}

\end{document}